\documentclass[natbib]{svjour3}
\usepackage{lmodern}
\usepackage{amssymb,amsmath}
\usepackage{ifxetex,ifluatex}
\usepackage{fixltx2e} 
\ifnum 0\ifxetex 1\fi\ifluatex 1\fi=0 
  \usepackage[T1]{fontenc}
  \usepackage[utf8]{inputenc}
\else 
  \ifxetex
    \usepackage{mathspec}
  \else
    \usepackage{fontspec}
  \fi
  \defaultfontfeatures{Ligatures=TeX,Scale=MatchLowercase}
  
\fi
\IfFileExists{upquote.sty}{\usepackage{upquote}}{}
\IfFileExists{microtype.sty}{%
\usepackage{microtype}
\UseMicrotypeSet[protrusion]{basicmath} 
}{}
\usepackage{hyperref}
\PassOptionsToPackage{usenames,dvipsnames}{color} 
\hypersetup{unicode=true,
            pdftitle={Machine Learning Guidance and Proof Certification for Connection Tableaux},
            pdfauthor={Michael~Färber; Cezary~Kaliszyk; Josef~Urban},
            pdfborder={0 0 0},
            breaklinks=true}
\urlstyle{same}  
\usepackage{natbib}
\bibliographystyle{spr-chicago}
\usepackage{listings}
\setlength{\emergencystretch}{3em}  
\providecommand{\tightlist}{%
  \setlength{\itemsep}{0pt}\setlength{\parskip}{0pt}}
\setcounter{secnumdepth}{5}

\title{Machine Learning Guidance and Proof Certification for Connection
Tableaux}
\author{Michael~Färber \and Cezary~Kaliszyk \and Josef~Urban}

\institute{M. Färber \at University of Innsbruck, Austria
\newline \email{michael.faerber@gedenkt.at} \and C. Kaliszyk \at University of Innsbruck, Austria
\newline \email{cezary.kaliszyk@uibk.ac.at} \and J. Urban \at Czech Technical University in Prague, Czech Republic
\newline \email{josef.urban@gmail.com}}

\journalname{J. Automated Reasoning}

\date{}
\usepackage{centerfloat}
\usepackage{tikz}
\usetikzlibrary{calc,tikzmark}
\usepackage{tikz-qtree}
\usepackage{subfig}
\usepackage{todonotes}

\usepackage{pgfplots}
\pgfplotsset{compat=1.9}
\usepgfplotslibrary{units}

\usepackage{booktabs}
\captionsetup[table]{skip=10pt}

\usepackage{prftree}

\usepackage{makecell}

\usepackage[strings]{underscore}


\usepackage{floatlistings}

\usepackage{operators}

\ifx\paragraph\undefined\else
\let\oldparagraph\paragraph
\renewcommand{\paragraph}[1]{\oldparagraph{#1}\mbox{}}
\fi
\ifx\subparagraph\undefined\else
\let\oldsubparagraph\subparagraph
\renewcommand{\subparagraph}[1]{\oldsubparagraph{#1}\mbox{}}
\fi

\begin{document}
\maketitle
\begin{abstract}
Connection calculi allow for very compact implementations of
goal-directed proof search. We give an overview of our work related to
connection tableaux calculi: First, we show optimised functional
implementations of clausal and nonclausal proof search, including a
consistent Skolemisation procedure for machine learning. Then, we show
two guidance methods based on machine learning, namely reordering of
proof steps with Naive Bayesian probablities, and expansion of a proof
search tree with Monte Carlo Tree Search. Finally, we give a translation
of connection proofs to LK, enabling proof certification and automatic
proof search in interactive theorem provers.
\end{abstract}

\section{Introduction}\label{introduction}

Connection calculi have enabled proof search in a variety of logics:
First-order automated theorem provers (ATPs) based on connection calculi
have been implemented for classical (leanCoP \citep{otten2008-leancop}),
intuitionistic (ileanCoP \citep{otten2005-ileancop}), and modal logic
(MleanCoP \citep{otten2014-mleancop}). A variant of leanCoP with
interpreted linear arithmetic (leanCoP-\(\Omega\)) won the TFA division
of CASC-J5 \citep{sutcliffe2011-j5}. Furthermore, nanoCoP
\citep{otten2016-nanocop} is a version of leanCoP able to perform proof
search without clausification.

The size of these provers is only in the range of hundreds of
bytes.\footnote{The prover at
  \url{http://www.leancop.de/programs/veryleancop.pl} is as small as 199
  bytes.} This makes these provers suitable as a basis for experiments
and adaptions, such as \emph{machine learning} and \emph{proof
certification}. For these applications, we implemented connection
provers in functional instead of logic programming languages. There are
several reasons: First, a large number of interactive theorem provers
(ITPs), such as HOL Light \citep{harrison2009-hollight}, HOL4
\citep{slind2008-hol4}, Isabelle \citep{wenzel2008-isabelle}, Coq
\citep{bertot2008-coq}, and Agda \citep{bove2009-agda} are written in
functional programming languages, lending themselves well to integration
of functional proof search tactics. Second, several machine learning
algorithms have been recently implemented efficiently for these ITPs in
the functional languages. Third, we achieve better performance with
functional-style implementations, which is important to compensate for
the performance penalty incurred by machine learning.

The machine learning connection provers MaLeCoP
\citep{urban2011-malecop} and FEMaLeCoP \citep{kaliszyk2015-femalecop}
integrate Naive Bayesian clause ordering into leanCoP's proof search.
The monteCoP \citep{faerber2017-montecop} system implements Monte Carlo
Tree Search in leanCoP. The resulting connection proofs can be certified
in interactive theorem provers \citep{kaliszyk2015-holcop}, giving at
the same time rise to proof search tactics similar to Metis
\citep{hurd2003-metis} or MESON \citep{harrison1996-meson}.

In this paper we develop an integration of internal guidance based on
machine learning and Monte Carlo methods in connection-style proof
search, as well as how to certify the resulting connection calculus
proofs. The contributions described in this paper are:

\begin{itemize}
\tightlist
\item
  Efficient implementations of proof search for the clausal and
  nonclausal connection calculi in functional programming languages, see
  \autoref{funconn}.
\item
  Integration of machine learning based internal guidance, see
  \autoref{bayesguidance}. We integrate a multi-layer indexing based on
  previous proofs suitable for a Naive Bayes classifier of extension
  steps. These rely on consistent symbol names
  (\autoref{skolemisation}). We further provide various strategies to
  improve the proof search based on the learned information.
\item
  Integration of Monte Carlo Tree Search, see
  \autoref{monte-carlo-proof-search}. This includes a number of proof
  state evaluation heuristics, some learned on previous proofs.
\item
  A unified formulation of clausal and nonclausal connection calculi
  adapted for proof translation, and proof certification integrated into
  HOL Light, see \autoref{certification}.
\end{itemize}

The paper combines and extends our works presented at CPP 2015
\citep{kaliszyk2015-holcop}, LPAR 2015 \citep{kaliszyk2015-femalecop},
and CADE 2017 \citep{faerber2017-montecop}. The techniques added over
the conference versions include: the higher-order logic reconstruction
of nonclausal proofs; consistent Skolemisation applicable also for
nonclausal proof search; and efficient functional-style implementation
of proof search in clausal and nonclausal connection calculi.\footnote{The
  source code of all implementations in this article is available at
  \url{http://cl-informatik.uibk.ac.at/users/mfaerber/cop.html}.}

\section{Connection Tableaux Calculi}\label{conncalculus}

Connection calculi provide a goal-oriented way to search for proofs in
classical and nonclassical logics \citep{otten2008-leancop}. Common to
these calculi is the concept of connections \(\{P, \lnot P\}\) between
literals \(P\) and \(\lnot P\), which correspond to closing a branch in
the tableaux calculus \citep{haehnle2001-tableaux}.

Connection tableaux calculi, such as \citep{bibel1983-matings}, are
members of the family of connection calculi. As the calculi considered
in this paper have a very small set of rules, they lend themselves very
well to proof translation and machine learning.

In this section, we introduce the \emph{clausal} and the
\emph{nonclausal connection calculus} that we will use throughout the
paper.

\subsection{Connection Calculi}\label{connection-calculi}

The connection calculi in this paper operate on \emph{matrices}, where a
matrix is a set of clauses. In the nonclausal calculus, clauses do not
only contain literals, but also matrices, giving rise to a nested
structure. \(M\) stands for a matrix, \(C\) for a clause, \(L\) for a
literal, \(x\) for a variable, and \(\vec x\) for a sequence of
variables, as in \(\forall \vec x. P(\vec x)\). A substitution
\(\sigma\) is a mapping from variables to terms. The \emph{complement}
\(\overline L\) is \(A\) if \(L\) has the shape \(\lnot A\), otherwise
\(\overline L\) is \(\lnot A\). A \emph{\(\sigma\)-complementary
connection} \(\{L, L'\}\) exists if \(\sigma \overline L = \sigma L'\).
Given a relation \(R\), its transitive closure is denoted by \(R^+\) and
its transitive reflexive closure by \(R^*\).

We will focus on two variants of the calculus: clausal and nonclausal.
As the two alternatives will differ only in the clauses and rules, we
first give a definition of the common parts of the clausal and
nonclausal connection calculi, omitting the calculus rules.

\begin{definition}[Connection Calculus]A connection calculus is a
calculus satisfying the following conditions. The words of the calculus
are tuples \(\langle C, M, Path \rangle\), where \(C\) is a clause,
\(M\) is a matrix, and \(Path\) is a set of literals called the
\emph{active path}. \(C\) and \(Path\) can also be empty, denoted
\(\varepsilon\). In the rules of the calculus, \(\sigma\) is a term
substitution, and \(\{L, L'\}\) is a \(\sigma\)-complementary
connection. The substitution \(\sigma\) is global (or \emph{rigid}),
i.e.~it is applied to the whole derivation.\end{definition}

To complete the definitions of the variants of the connection calculus,
we need to specify the types of clauses and the rules. In the clausal
connection calculus, a clause is a set of literals. The calculus rules
are presented in \autoref{fig:clausal-calculus}.

\begin{figure}
  \begin{tabular}{r m{4cm}}
  \parbox[c]{\hsize}{
    \begin{prfenv}
      \prftree[r]{Axiom}
      {}
      {\{\}, M, Path}
    \end{prfenv}}
  &
  \\[1em]
  \parbox[c]{\hsize}{
    \begin{prfenv}
      \prftree[r]{Start}
      {\prfassumption{C_2, M, \{\}}}
      {\varepsilon, M, \varepsilon}
    \end{prfenv}}
  & where $C_2$ is copy of $C_1 \in M$
  \\
  \parbox[c]{\hsize}{
    \begin{prfenv}
      \prftree[r]{Reduction}
      {C           , M, Path \cup \{L'\}}
      {C \cup \{L\}, M, Path \cup \{L'\}}
    \end{prfenv}}
  & where $\sigma(L) = \sigma(\overline{L'})$
  \\
  \parbox[c]{\hsize}{
    \begin{prfenv}
      \prftree[r]{Extension}
      {\prfassumption{C_2 \setminus \{L'\}, M, Path \cup \{L\}}}
      {\prfassumption{C                   , M, Path}}
      {C \cup \{L\}, M, Path}
    \end{prfenv}}
  & where
  $C_2$ is copy of $C_1 \in M$ and
  $L' \in C_2$ with $\sigma(L) = \sigma(\overline{L'})$
\end{tabular}

  \caption{Clausal connection calculus rules.}
  \label{fig:clausal-calculus}
\end{figure}

In the nonclausal connection calculus, a clause is a set of literals and
matrices. The following definitions of the concepts used in the
extension rule follow Otten
\citep{otten2011-nonclausal, otten2016-nanocop}.

\begin{definition}[Clause Predicates]\label{def:clausepreds}A clause
\(C\) \emph{contains} \(L\) iff \(L \in ^+ C\). A clause \(C \in ^+ M\)
is \emph{\(\alpha\)-related} to a literal \(L\) iff \(M' \in ^* M\) with
\(\{C_L, C_C\} \subseteq M'\) such that \(C_L \neq C_C\),
\(L \in ^+ C_L\), and \(C \in ^* C_C\). \(C'\) is a \emph{parent clause}
of \(C\) iff \(M' \in C'\) and \(C \in M'\) for some matrix
\(M'\).\end{definition}

\begin{definition}[Clause Functions]\label{def:clausefuns}A \emph{copy
of the clause} \(C\) in the matrix \(M\) is created by replacing all
\emph{free variables} in \(C\) with fresh variables.
\(M[C_1 \backslash C_2]\) denotes the matrix \(M\) in which the clause
\(C_1\) is replaced by the clause \(C_2\).\end{definition}

\begin{definition}[Extension Clause,
\(\beta\)-clause]\label{def:ebclause}\(C\) is an \emph{extension clause}
(\emph{e-clause}) of the matrix \(M\) with respect to a set of literals
\(Path\) iff either (a) \(C\) contains a literal of \(Path\), or (b)
\(C\) is \(\alpha\)-related to all literals of \(Path\) occurring in
\(M\) and if \(C\) has a parent clause, that parent clause contains a
literal of \(Path\). The \emph{\(\beta\)-clause} of \(C_2\) with respect
to \(L_2\) is \(C_2\) with \(L_2\) and all clauses that are
\(\alpha\)-related to \(L_2\) removed.\end{definition}

The rules are presented in \autoref{fig:nonclausal-calculus}. The
difference in the calculus rules to the clausal variant is the addition
of a decomposition rule, and the adaptation of the extension rule to the
nonclausal setting.

\begin{figure}
  \begin{tabular}{r m{4.5cm}}
  \parbox[c]{\hsize}{
    \begin{prfenv}
      \prftree[r]{Axiom}
      {}
      {\{\}, M, Path}
    \end{prfenv}}
  &
  \\[1em]
  \parbox[c]{\hsize}{
    \begin{prfenv}
      \prftree[r]{Start}
      {\prfassumption{C_2, M, \{\}}}
      {\varepsilon, M, \varepsilon}
    \end{prfenv}}
  & where $C_2$ is copy of $C_1 \in M$
  \\
  \parbox[c]{\hsize}{
    \begin{prfenv}
      \prftree[r]{Reduction}
      {C           , M, Path \cup \{L'\}}
      {C \cup \{L\}, M, Path \cup \{L'\}}
    \end{prfenv}}
  & where $\sigma(L) = \sigma(\overline{L'})$
  \\
  \parbox[c]{\hsize}{
    \begin{prfenv}
      \prftree[r]{Extension}
      {\prfassumption{C_3, M[C_1 \backslash C_2], Path \cup \{L\}}}
      {\prfassumption{C  , M                    , Path}}
      {C \cup {L}, M, Path}
    \end{prfenv}}
  & where
    $C_3$ is the $\beta$-clause of $C_2$ wrt. $L'$,
    $C_2$ is copy of $C_1$, $C_1$ is e-clause of $M$ wrt. $Path \cup \{L\}$,
    $C_2$ contains $L'$ with $\sigma(L) = \sigma(\overline{L'})$
  \\
  \parbox[c]{\hsize}{
    \begin{prfenv}
      \prftree[r]{Decomposition}
      {\prfassumption{C \cup C', M, Path}}
      {C \cup {M'}, M, Path}
    \end{prfenv}}
  & where $C' \in M'$
\end{tabular}

  \caption{Nonclausal connection calculus rules.}
  \label{fig:nonclausal-calculus}
\end{figure}

\begin{example}\label{ex:conn-calc}Consider the following formula \(F\)
and its conjunctive normal form \(F'\). We will attempt to show that
they imply \(\bot\): \[\begin{aligned}
  F  &= Q \land P(a) \land
\forall x. (\lnot P(x) \lor (\lnot P(s^2 x) \land (P(sx) \lor \lnot Q))), \\
  F' &= \forall x.
(Q \land P(a) \land
(\lnot P(x) \lor \lnot P(s^2 x)) \land
(\lnot P(x) \lor P(sx) \lor \lnot Q)).
  \end{aligned}\] For brevity, we write \(sx\) for \(s(x)\) and
\(s^2 x\) for \(s(s(x))\). The nonclausal matrix \(M\) corresponds to
\(F\) and the clausal matrix \(M'\) to \(F'\):

\begin{small}
  $M = \left[
  [Q]
  [P(a)]
  \left[\begin{matrix}
    [\lnot P(x)] \\
    \left[
      [\lnot P(s^2 x)]
      \left[\begin{matrix} P(sx) \\ \lnot Q \end{matrix}\right]
    \right] \\
  \end{matrix}\right]
\right]
, \quad M' = \left[
[Q]
[P(a)]
\left[\begin{matrix} \lnot P(x) \\ \lnot P(s^2 x) \end{matrix}\right]
\left[\begin{matrix} \lnot P(x) \\ P(sx) \\ \lnot Q \end{matrix}\right]
\right]
.$
  \end{small}

Graphical proofs for the problem are given in
\autoref{fig:matrix-nonclausal} respectively
\autoref{fig:matrix-clausal}: There, lines represent connections, and
the substitution used is
\(\sigma = \{x' \mapsto a, \hat x \mapsto sx', \bar x \mapsto x'\}\). A
formal proof for \(M'\) in the clausal connection calculus is given in
\autoref{fig:clausal-calc-ex}. A shorter proof for \(M'\) as well as a
formal proof for \(M\) will be given using slightly modified versions of
the calculi in \autoref{conn-calc-recon}.\end{example}

\begin{figure}

\newcommand{\ncmatrixmark}[1]{\tikzmark{ncm#1}}
\newcommand{\ncmatrixprefix}[1]{pic cs:ncm#1}

\newcommand{\ncmatrix}[1]{
\left[
  [Q #1{q}]
  [P(a) #1{pa}]
  \left[\begin{matrix}
    [\lnot P(x') #1{npx1}] \\
    \left[
      [\lnot P(s^2 x') #1{npssx1}]
      \left[\begin{matrix}
        P(sx') #1{psx1} \\
        \lnot Q #1{nq1}
      \end{matrix}\right]
    \right] \\
  \end{matrix}\right]
  \left[\begin{matrix}
    [\lnot P(\hat x) #1{npx2}] \\
    \left[
      [\lnot P(s^2 \hat x)]
      \left[\begin{matrix}
        P(s \hat x) #1{psx2} \\
        \lnot Q #1{nq2}
      \end{matrix}\right]
    \right] \\
  \end{matrix}\right]
\right]
}

\newcommand{\ncmatrixconn}[1]{
\begin{tikzpicture}[overlay,remember picture]
  \coordinate (Q)      at ($(#1{q})+(-0.3em,-0.3em)$);
  \coordinate (NQ1)    at ($(#1{nq1})+(-0.5em, -0.3em)$);
  \coordinate (NQ2)    at ($(#1{nq2})+(-0.5em, -0.3em)$);
  \coordinate (PA)     at ($(#1{pa})+(-1.3em,0.9em)$);
  \coordinate (NPX1)   at ($(#1{npx1})+(-2.5em, 0.5em)$);
  \coordinate (NPX2)   at ($(#1{npx2})+(-2.2em, 0.6em)$);
  \coordinate (NPSSX1) at ($(#1{npssx1})+(-2.5em,-0.3em)$);
  \coordinate (PSX1)   at ($(#1{psx1})+(-1.5em,0.9em)$);
  \coordinate (PSX2)   at ($(#1{psx2})+(-2.5em, -0.3em)$);
  \coordinate (QNQ21)   at ($(NPSSX1)+(-1em, -1.75em)$);
  \coordinate (QNQ22)   at ($(NQ2)+(-3em, -1em)$);
  \draw (Q)      to [bend angle=30, bend right] (NQ1);
  \draw (PA)     to [bend angle=50, bend left] (NPX1);
  \draw (NPSSX1) to [bend angle=25, bend right](PSX2);
  \draw (PSX1)   to [bend angle=40, bend left] (NPX2);
  \draw plot [smooth] coordinates {(Q) (QNQ21) (QNQ22) (NQ2)};
\end{tikzpicture}
}
  $\ncmatrix{\ncmatrixmark}$
  \ncmatrixconn{\ncmatrixprefix}
  \caption{Nonclausal graphical connection proof.}
  \label{fig:matrix-nonclausal}
\end{figure}

\begin{figure}

\newcommand{\cmatrixmark}[1]{\tikzmark{cm#1}}
\newcommand{\cmatrixprefix}[1]{pic cs:cm#1}

\newcommand{\cmatrix}[1]{
\left[
  [Q #1{q}]
  \left[\begin{matrix}
    \lnot P(x') #1{npx1} \\
    P(sx') #1{psx1} \\
    \lnot Q #1{nq1}
  \end{matrix}\right]
  [P(a) #1{pa}]
  \left[\begin{matrix}
    \lnot P(\hat x) #1{npx2} \\
    P(s \hat x) #1{psx2} \\
    \lnot Q #1{nq2}
  \end{matrix}\right]
  \left[\begin{matrix}
    \lnot P(\bar x) #1{npx3} \\
    \lnot P(s^2 \bar x) #1{npssx}
  \end{matrix}\right]
\right]
}

\newcommand{\cmatrixconn}[1]{
\begin{tikzpicture}[overlay,remember picture]
  \coordinate (Q)      at ($(#1{q})+(-0.3em,-0.3em)$);
  \coordinate (NQ1)    at ($(#1{nq1})+(-0.5em, -0.3em)$);
  \coordinate (NQ2)    at ($(#1{nq2})+(-0.5em, -0.3em)$);
  \coordinate (PA)     at ($(#1{pa})+(-1.3em,0.9em)$);
  \coordinate (NPX1)   at ($(#1{npx1})+(-2.5em, 0.6em)$);
  \coordinate (NPX2)   at ($(#1{npx2})+(-2.3em, 0.5em)$);
  \coordinate (NPX3)   at ($(#1{npx3})+(-2.3em, 0.5em)$);
  \coordinate (NPSSX)  at ($(#1{npssx})+(-2.5em,-0.3em)$);
  \coordinate (PSX1)   at ($(#1{psx1})+(0em,0.9em)$);
  \coordinate (PSX2)   at ($(#1{psx2})+(0em, -0.3em)$);
  \draw (Q)      to [bend angle=30, bend right] (NQ1);
  \draw (Q)      to [out=-50, in=-160] (NQ2);
  \draw (NPX1)   to [bend angle=50, bend left]   (PA);
  \draw (PSX2)   to [bend angle=60, bend right](NPSSX);
  \draw (PSX1)   to [bend angle=40, bend left] (NPX2);
  \draw (NPX3)   to [bend angle=40, bend right] (PA);
\end{tikzpicture}
}
  $\cmatrix{\cmatrixmark}$
  \cmatrixconn{\cmatrixprefix}
  \caption{Clausal graphical connection proof.}
  \label{fig:matrix-clausal}
\end{figure}

\begin{figure}
  $\input{prftree/clausal-calc-ex2}$
  \caption{Clausal connection proof.}
  \label{fig:clausal-calc-ex}
\end{figure}

Soundness and completeness have been proved both for the clausal
\citep{letz2001-connection} and for the nonclausal calculus
\citep{otten2011-nonclausal}. We will discuss practical functional-style
implementations of proof search for the presented calculi in
\autoref{funconn}.

\section{Functional-style Connection Prover}\label{funconn}

In this section, we develop an efficient implementation of a connection
prover for classical first-order logic in a functional programming
language. The resulting implementation will be the basis for all
experiments in the remainder of the paper.

The connection prover performs the following tasks. Given a classical
first-order logic problem, it creates a matrix for the problem. The
matrix is then used to build an index (usually in the form of a literal
database) which will be used to provide an efficient way to find
connections during proof search. Finally, proof search with iterative
deepening is performed. We detail these steps in the next sections.

\subsection{Problem Preprocessing}\label{preprocessing}

We focus on first-order logic problems represented as a set of axioms
\(\{A_1, \dots, A_n\}\) together with a conjecture \(C\), where all
axioms and the conjecture are closed formulas. The goal is to show that
the axioms imply the conjecture. For convenience, in the actual
implementation we use the TPTP format \citep{sutcliffe2009-tptp} as
input. Each parsed input problem is transformed according to the
following procedure. Only the steps 2 and 6 differ in comparison with
the original Prolog implementations of leanCoP and nanoCoP
\citep{otten2008-leancop, otten2016-nanocop}.

\begin{enumerate}
\def\labelenumi{\arabic{enumi}.}
\tightlist
\item
  The conjecture \(C\) is combined with the axioms
  \(\{A_1, \dots, A_n\}\) to form the new problem
  \((A_1 \land \dots \land A_n) \rightarrow C\) (respectively \(C\) if
  no axioms are present).
\item
  Constants and variables are mapped to integers, to enable more
  efficient lookup and comparison during the proof search, as needed
  e.g.~for fast unification.
\item
  As the connection tableaux calculi considered in this paper do not
  have special rules for equality, equality axioms are added to the
  problem if equality appears in the original problem. The axioms are:
  reflexivity, symmetry, transitivity, and:

  \begin{itemize}
  \tightlist
  \item
    For every \(n\)-ary function \(f\), the formula
    \(x_1 = y_1 \rightarrow \dots \rightarrow x_n = y_n \rightarrow  f(x_1, \dots, x_n) = f(y_1, \dots, y_n)\)
    is introduced.
  \item
    For every \(n\)-ary predicate \(P\), the formula
    \(x_1 = y_1 \rightarrow \dots \rightarrow x_n = y_n \rightarrow  P(x_1, \dots, x_n) \rightarrow P(y_1, \dots, y_n)\)
    is introduced.
  \end{itemize}
\item
  If the formula has the shape \(P \rightarrow C\), then it is
  transformed to the equivalent
  \((P \land \#) \rightarrow (C \land \#)\). \(\#\) is a marker that can
  be understood to be equivalent to \(\top\). It allows proof search to
  recognise clauses stemming from the conjecture (see
  \citep[sec.~2.1]{otten2008-leancop}).
\item
  Implications and equivalences are expanded, e.g. \(A \rightarrow B\)
  becomes \(\lnot A \lor B\).
\item
  Quantifiers are pushed inside such that their scope becomes minimal. 
\item
  The formula is negated (to perform a proof by refutation) and
  converted to negation normal form.
\item
  The formula is reordered to minimise the number of paths through the
  matrix.
\item
  The formula is rectified, i.e.~variables are renamed such that any two
  distinct quantifiers in the formula bind variables with different
  names.
\item
  The formula is skolemised. For machine learning, we use consistent
  Skolemisation as discussed in \autoref{skolemisation} instead of outer
  Skolemisation as performed in the original Prolog version.
\end{enumerate}

\subsection{Matrix and Literal Database}\label{matlitdb}

The matrix is built from the skolemised formula resulting from
\autoref{preprocessing}. For the clausal connection prover, this
involves a transformation of the formula into clausal normal form. The
\emph{standard transformation} applies distributivity rules of the shape
\(A \land (B \lor C) \equiv (A \lor B) \land (A \lor C)\) to the formula
until a fix point is reached. In the worst case, this transformation
makes the formula grow exponentially. To avoid this, the
\emph{definitional transformation} introduces new symbols
\citep{plaisted1986-cnf, otten2010-cut}. Similarly to Skolemisation, the
introduced symbols should be consistent across different problems, which
is achieved by using a normalised string representation of the clause
literals as new symbol names. For the nonclausal connection prover, no
clausification is required, as the formula can be directly transformed
into the nonclausal matrix. For both clausal and nonclausal matrices,
the polarity of literals is encoded by the sign of the integer
representing the predicate symbol.

We next explain how the prover efficiently searches for connections.
leanCoP and nanoCoP rely on Prolog's internal literal indexing. For
every literal \(L\) a set of \emph{contrapositives} is stored, where a
contrapositive corresponds to branches that have to be opened in order
to close the leaf \(\overline L\). Finding a connection then amounts to
finding contrapositives for \(L\) such that \(L\) can be unified with
the negation of the leaf of the current branch. In the clausal case, the
contrapositive for a literal \(L\) in the clause \(C\) is
\(C \setminus L\). In the nonclausal case, the contrapositive for \(L\)
is a copy of the matrix in which all clauses \(\alpha\)-related to \(L\)
are deleted, allowing for the efficient construction of e-clauses. In
our implementation, we store contrapositives in a hash table indexed by
the root symbols of literals. This allows efficient search for literals
that have the same root as the leaf of the current branch, but with
opposite polarity.

We also considered storing contrapositives in first-order term indexing
structures \citep{ramakrishnan2001-indexing}. The overall effect on
performance of storing contrapositives in a discrimination tree
\citep{greenbaum1986-phd} on the considered datasets is however minor,
as unification with array substitutions (see \autoref{unification}) is
relatively fast.

\subsection{Proof Search}\label{proof-search}

In both clausal and nonclausal calculi, proof search is analytic,
i.e.~the proof tree is constructed bottom-up. As the proof search is not
confluent, i.e.~making a wrong choice can lead to a dead end,
backtracking is necessary for completeness. The proof tree is
constructed with a depth-first strategy, which results in an incomplete
proof search. To remedy this, iterative deepening is used, where the
maximal path length \lstinline!lim! is increased in every iteration.

The principal implementations of the connection tableaux calculi,
leanCoP and nanoCoP, use a number of optimisation techniques, such as
regularity, lemmata, and restricted backtracking \citep{otten2010-cut}.
When backtracking is restricted, as soon as the proof search finds some
proof tree to close a branch, no other potential proof trees for that
branch are considered any more. While restricted backtracking loses
completeness, it significantly increases the number of problems solved
for various first-order problem classes.

Prolog allows for a very elegant and succinct implementation of proof
search. First attempts to directly integrate machine learning into
Prolog leanCoP have suffered from slow speed \citep{urban2011-malecop}.
We later showed \citep{kaliszyk2015-femalecop, kaliszyk2015-holcop} that
implementations of leanCoP in a functional programming language allow
for fast machine learning as well as for efficient proof reconstruction
in interactive theorem provers. However, implementing proof search with
restricted backtracking in a functional language is not straightforward.

In this section, we discuss several implementations of a clausal prover
loop that permits restricted backtracking: The simplified version of
leanCoP shown in \autoref{prolog} is the smallest, but also the slowest
implementation. Care is taken that all subsequent implementations
perform the proof search in precisely the same order as the original
Prolog implementation. We then introduce a purely functional
implementation in \autoref{lazylist} using lazy lists respectively
streams. This version slightly increases code size compared to the
Prolog version, but greatly improves performance, as shown in the
evaluation in \autoref{funconn-eval}. We also discuss an approach based
on continuations, still purely functional, but more complicated than the
stream version. In exchange, this version has slightly better
performance than the stream one, likely due to not having to allocate
memory for (stream) constructors. The fastest, but also most complicated
implementation considered in this paper uses an explicit stack and
exceptions for backtracking. However, as it proves only as many problems
as the continuation-based solution, we will only briefly discuss it.

\subsubsection{Prolog}\label{prolog}

A simplified version of the original leanCoP in Prolog is given in
\autoref{lst:leancop-pl}. We explain and relate it to the clausal
connection calculus introduced in \autoref{conncalculus}.

The main predicate \lstinline!prove(!\(C\)\lstinline!,!
\(Path\)\lstinline!, PathLim)! succeeds iff there exists a closed proof
tree for \(C, M, Path\) with a maximal \(Path\) length of
\lstinline!PathLim!. For this, \lstinline!prove! attempts to close the
proof tree for the first literal \lstinline!Lit! of \(C\) in lines 4--9,
and if successful, it continues with the remaining clause
\lstinline!Cla! of \(C\) in line 10.

Let us detail the proof search for the current literal \lstinline!Lit!:
Line 4 corresponds to the \emph{reduction} rule: The branch is closed if
the negation of \lstinline!Lit! can be unified with a literal on the
\(Path\). Lines 6--8 correspond to the \emph{extension} rule: The
literal database as explained in \autoref{matlitdb} is implemented by
the predicate \lstinline!lit(!\(L\)\lstinline!,! \(C\)\lstinline!)!,
which succeeds iff the matrix contains some clause that can be unified
with \(\{L\} \cup C\). This is used to obtain some contrapositive
\lstinline!Cla1! for the negation of \lstinline!Lit!. If the path does
not exceed the length limit (line 7), new branches are opened for
\lstinline!Cla1! in line 8.

Backtracking is handled by the Prolog semantics: For example, if
choosing the first matching contrapositive for \lstinline!Lit! leads to
the proof search getting stuck, the next contrapositive will be tried by
Prolog.

\begin{lstlisting}[language=Prolog, caption=Clausal proof search in Prolog., label=lst:leancop-pl]
prove([],_,_).
prove([Lit|Cla],Path,PathLim) :-
    (-NegLit=Lit;-Lit=NegLit) ->
       ( member(NegL,Path), unify_with_occurs_check(NegL,NegLit)
         ;
         lit(NegLit,Cla1),
         ( length(Path,K), K<PathLim -> true ; fail ),
         prove(Cla1,[Lit|Path],PathLim)
       ),
       prove(Cla,Path,PathLim).
\end{lstlisting}

\subsubsection{Lazy Lists and Streams}\label{lazylist}

Proof search in a functional language can be elegantly implemented as a
function from a branch to a \emph{lazy list} of proofs, where a lazy
list is an arbitrarily long list built on demand. However, as the proof
search considers every list element maximally once, the memoization done
for lazy lists creates an unnecessary overhead. For that reason,
\emph{streams} can be used instead of lazy lists, where a stream is a
special case of a lazy list that restricts list elements to be traversed
maximally once. As our application uses a common interface for lazy
lists and streams, we solely present the lazy list version here.

\autoref{lst:lazycop} shows a functional leanCoP implementation using
lazy lists. Let us first introduce the semantics of the used constructs:

\begin{itemize}
\tightlist
\item
  \lstinline!x & f! denotes \lstinline!f x!.
\item
  \lstinline!\ x -> y! stands for a lambda expressions \(\lambda x. y\).
\item
  \lstinline!unify sub lit1 lit2! unifies two literals \lstinline!lit1!
  and \lstinline!lit2! under a substitution \lstinline!sub!, returning a
  new substitution if successful.
\item
  \lstinline!unifyDB sub lit! finds all contrapositives in the literal
  database which could match the literal \lstinline!lit! under the
  substitution \lstinline!sub!. It returns a list of
  substitution-contrapositive pairs. It corresponds to the
  \lstinline!lit! predicate in the Prolog version.
\item
  \lstinline!mapMaybe f l! returns the results of \lstinline!f! for the
  elements of \lstinline!l! on which \lstinline!f! succeeded.
\item
  \lstinline!concatMap f l! maps \lstinline!f! over all elements of
  \lstinline!l! and concatenates the resulting list of lists to form a
  flat list.
\item
  \lstinline!x ++ y! is the concatenation of two lists \lstinline!x! and
  \lstinline!y!.
\end{itemize}

The main function \lstinline!prove! \(C\) \(Path\) \lstinline!lim!
\(\sigma\) returns a list of substitutions
\([\sigma _1, \dots, \sigma _n]\), where every substitution
\(\sigma _i\) corresponds to a closed proof tree for \(C, M, Path\) with
a maximal path length smaller than \lstinline!lim!, where the global
initial substitution is \(\sigma\) and the final substitution is
\(\sigma _i\).\footnote{In this simplified implementation, the actual
  proof tree is not recorded, in contrast to our actual implementation.
  The same holds for the Prolog version.} Similarly to the Prolog
version, \lstinline!prove! attempts to close the proof tree for the
first literal \lstinline!lit! of \(C\) in lines 4--8, and the resulting
substitutions are used to close the proof trees for the remaining clause
\lstinline!cla! of \(C\) in line 9. Line 4 corresponds to the reduction
rule, and lines 5--8 correspond to the extension rule. As we use lazy
lists / streams, a substitution \(\sigma_ i\) is only calculated if
proof search failed for all \(\sigma _j\) with \(j < i\).

\begin{lstlisting}[language=Haskell, caption=Lazy list implementation of clausal proof search., label=lst:lazycop]
prove [] path lim sub = [sub]
prove (lit : cla) path lim sub =
  let
    reductions = mapMaybe (unify sub (negate lit)) path
    extensions = unifyDB sub lit & concatMap
      (\ (sub1, cla1) ->
        if lim <= 0 then []
        else prove cla1 (lit : path) (lim - 1) sub1)
  in concatMap (prove cla path lim) (reductions ++ extensions)
\end{lstlisting}

\subsubsection{Continuations}\label{continuations}

Continuation passing style (CPS) allows the implementation of algorithms
with complicated control flow in functional languages
\citep{plotkin1975-continuation}. Our CPS implementation shown in
\autoref{lst:conticop} defines three functions that call each other
mutually: \lstinline!prove! starts the proof search, \lstinline!reduce!
performs reduction steps and \lstinline!extend! performs extension
steps. Two continuations are passed to the \lstinline!prove! function:
One function \lstinline!alt! to be called in case the proof search has
hit a dead end and needs to backtrack to an alternative, and one
function \lstinline!rem! to be called when a branch has been closed to
process the remaining open branches.

\begin{lstlisting}[language=Haskell, caption=CPS implementation of clausal proof search., label=lst:conticop]
prove [] path lim sub alt rem = rem (sub, alt)
prove (lit : cla) path lim sub alt rem = reduce path where
  reduce (plit : path) =
    let alt1 () = reduce path
    in case unify sub (negate lit) plit of
      Just sub1 -> prove cla path lim sub1 alt1 rem
      Nothing -> alt1 ()
  reduce [] = extend (unifyDB sub (negate lit))

  extend ((sub1, cla1) : contras) =
    let rem1 (sub, alt) = prove cla path lim sub alt rem
        alt1 () = extend contras
    in
      if lim <= 0 then alt1 ()
      else prove cla1 (lit : path) (lim - 1) sub1 alt1 rem1
  extend [] = alt ()
\end{lstlisting}

\subsubsection{Stacks}\label{stacks}

We considered an implementation based on stacks. There, the main prove
function has the same arguments as the \lstinline!prove! function of the
stream-based implementation, plus a stack. This stack contains tuples
with information about clauses that still have to be processed, together
with the depth at which the clauses have been put onto the stack. Once
the current clause has been completely refuted, the next tuple is popped
from the stack and the clause in the tuple is processed.

\subsection{Unification}\label{unification}

Unification is one of the most time consuming parts of proof search,
therefore it is crucial to represent data, including substitutions, in a
way that allows efficient unification.

The simplest approach to represent substitutions is to use association
lists from variables to terms. This is done e.g.~in the HOL Light
implementation of MESON \citep{harrison1996-meson}. However, as variable
lookup is linear in the number of bound variables, this approach does
not scale well. An improvement over this is to use tree-based maps, used
for example by Metis \citep{hurd2003-metis}. Both solutions however
incur a significant overhead in tableaux proof search, where a single
large substitution is needed.

In functional languages with efficient support for arrays (e.g.~the ML
language family, used in many proof systems), it is more efficient to
store the substitution in a single global mutable array. As variables
can be represented by positive integers, the \(n\)th array element
contains the term bound to the variable \(n\). By keeping a stack of
variables bound in each prover state, it is also possible to backtrack
efficiently: variables removed from the top of the stack are removed
from the global array. This way, backtracking can be done as if the
substitution was contained in a purely functional data structure,
however allowing for more efficient unification.

\subsection{Clause Processing Order}\label{beta-order}

Proof search processes clauses and matrices in a certain order \(<\),
such that for any elements \(a, b\) of the same clause or matrix, \(a\)
is processed before \(b\) iff \(a < b\). The order \(<\) is usually
derived from the structure of the formula obtained in
\autoref{preprocessing}. For nonclausal proof search, we have evaluated
different ways to order \(\beta\)-clauses: The original nanoCoP
processes the \(\beta\)-clause of a clause \(C\) w.r.t. a literal \(L\)
(see \autoref{def:ebclause}) using the order \(< _L\), where \(a <_L b\)
iff \(a\) contains \(L\) or \(a < b\). The reconstruction of proofs
created in this order requires some postprocessing; this motivated our
usage of the regular \(<\) for \(\beta\)-clauses.

Given an order, we can write sets as ordered sequences
\([X_1, \dots, X_n]\), where for all \(i < n\), \(X_i < X_{i+1}\).
Clauses and matrices can thus be shown as horizontal respectively
vertical sequences. We use this notation for an example to illustrate
the influence of the ordering.

\newcommand{\ncMatOneExtNew}[1]{
\left[\begin{array}{c}
M_1 \\
\left[\begin{array}{cc}
C_1 &
\left[\begin{array}{c}
#1
\end{array}\right]
\end{array}\right]
\end{array}\right]
}

\newcommand{\ncMatOneExtClassic}[1]{
\left[\begin{array}{c}
\left[\begin{array}{cc}
\left[\begin{array}{c}
#1
\end{array}\right]
& C_1
\end{array}\right]
\\ M_1
\end{array}\right]
}

\begin{example}Let \(M\) contain a clause
\[C = \ncMatOneExtNew{L \\ M_2 \\ M_3}.\] Then the \(\beta\)-clauses of
\(C\) w.r.t. \(L\) ordered by \(<\) and \(<_L\) are \(\beta _<\) and
\(\beta _{<_L}\): \[\beta _<     = \ncMatOneExtNew{M_2 \\ M_3} \qquad
  \beta _{<_L} = \ncMatOneExtClassic{M_2 \\ M_3}.\] One can see that
when using \(\beta _{<_L}\), the neighbours \(M_2\) and \(M_3\) of \(L\)
are processed first, unlike when using \(\beta _<\).\end{example}

\subsection{Evaluation}\label{funconn-eval}

We evaluate our work on several first-order problem datasets:

\begin{itemize}
\tightlist
\item
  TPTP \citep{sutcliffe2009-tptp} is a large benchmark for automated
  theorem provers. It became standard due to its usage in the yearly
  CASC competion \citep{sutcliffe2016-j8}. The contained problems are
  based on different logics and originate from various domains. In our
  evaluation we use the nonclausal first-order problems of TPTP~6.3.0.
\item
  MPTP2078 \citep{alama2014-premsel} contains 2078 problems exported
  from the Mizar Mathematical Library. This dataset is particularly
  suited for symbolic machine learning since symbols are shared between
  problems. It comes in the two flavours ``bushy'' and ``chainy'': In
  the ``chainy'' dataset, every problem contains all facts stated before
  the problem, whereas in the ``bushy'' dataset, every problem contains
  only the premises required in Mizar to prove that problem.
\item
  Miz40 contains the problems from the Mizar library for which at least
  one ATP proof has been found using one of the 14 combinations of
  provers and premise selection methods considered in
  \citep{kaliszyk2015-mizar40}. The problems are translated to untyped
  first-order logic using the MPTP infrastructure
  \citep{urban2004-mptp}. Symbol names are also used consistently in
  this dataset, and the problems are minimised using ATP-based
  pseudo-minimisation, i.e., re-running the ATP only with the set of
  proof-needed axioms until this set no longer becomes smaller. This
  typically leads to even better axiom pruning and ATP-easier problems
  than in the Mizar-based pruning used for the ``bushy'' version above.
\item
  HOL Light: We translate theorems proven in HOL Light to FOL, following
  a similar procedure as \citep{kaliszyk2014-holyhammer}. We export
  top-level theorems as well as theorems proven by the \lstinline!MESON!
  tactic.\footnote{As part of exporting theorems solved by MESON, we
    perform some of the original MESON preprocessing, such as
    propositional simplification, Skolemisation, fixing of function
    arities and so on. This preprocessing may solve the problem, in
    which case we do not export the problem at hand.} We consider the
  theorems proven in the core of HOL Light (HL) as well as those proven
  by the Flyspeck project (FS), which finished in 2014 a formal proof of
  the Kepler conjecture \citep{hales2017-kepler}.
\end{itemize}

\begin{table}

\caption{Evaluation datasets and number of contained first-order
problems. \label{fof-datasets}}

\begin{tabular}{lrrrrrrr}

\toprule

Dataset & TPTP & MPTP & Miz40 & HL-top & HL-meson & FS-top & FS-meson\tabularnewline

\midrule

Problems & 7492 & 2078 & 32524 & 2498 & 1108 & 27111 & 39979\tabularnewline

\bottomrule

\end{tabular}

\end{table}

We use a 48-core server with AMD Opteron 6174 2.2GHz CPUs, 320 GB RAM,
and 0.5 MB L2 cache per CPU. Each problem is always assigned one CPU. As
strategy scheduling is not a focus of this work, we evaluate all provers
with disabled strategy scheduling.

We evaluated several prover configurations in \autoref{tab:table1}. As
state of the art, we used the ATPs Vampire 4.0
\citep{kovacs2013-vampire} and E 2.0 \citep{schulz2013-e}, which
performed best in the first-order category of CASC-J8
\citep{sutcliffe2016-j8}. Vampire and E are written in low-level
languages (C respectively C++), implement the superposition calculus,
and perform premise selection with SInE \citep{hoder2011-sine}.
Furthermore, Vampire integrates several SAT solvers
\citep{biere2014-satvampire}, and E automatically determines proof
search settings for a given problem. We ran E with
\lstinline!--auto --auto-schedule! and Vampire with
\lstinline!--mode casc!. In addition, we evaluated the ATP Metis
\citep{hurd2003-metis}: It implements the ordered paramodulation
calculus (having inference rules for equality just like the
superposition calculus), but is considerably smaller than Vampire and E
and is implemented in a functional language, making it more comparable
to our work.

We implemented functional-style versions of leanCoP 2.1 and nanoCoP 1.0
in the functional programming language OCaml, using the techniques
introduced such as efficient control flow (\autoref{proof-search}),
array-based substitutions (\autoref{unification}), alternative clause
processing orders (\autoref{beta-order}), and consistent Skolemisation
(\autoref{skolemisation}). Our functional OCaml implementations are
fleanCoP and fnanoCoP, whereas the original Prolog versions are pleanCoP
and pnanoCoP. The Prolog versions were run with ECLiPSe 5.10. A prover
configuration containing ``\(+\)x'' or ``\(-\)x'' means that feature x
was enabled respectively disabled. ``cut'' denotes restricted
backtracking, ``conj'' stands for conjecture-directed search, and
\(\beta _{<_L}\) refers to the default \(\beta\)-clause ordering shown
in \autoref{beta-order}. leanCoP was evaluated without definitional
clausification, see \autoref{matlitdb}. The OCaml implementations use
streams to control backtracking (see \autoref{lazylist}) and arrays as
substitutions.

\begin{table}

\caption{Comparison of provers without machine learning.
\label{tab:table1}}

\begin{tabular}{lrrrrrr}

\toprule

Prover & TPTP & Bushy & Chainy & Miz40 & FS-top & FS-meson\tabularnewline

\midrule

Vampire & 4404 & 1253 & 656 & 30341 & 6358 & 39760\tabularnewline
E & 3664 & 1167 & 287 & 26003 & 7382 & 39740\tabularnewline
Metis & 1376 & 500 & 75 & 18519 & 3537 & 38625\tabularnewline
\midrule fleanCoP\(+\)cut\(+\)conj & 1859 & 670 & 289 & 12204 & 3980 & 35738\tabularnewline
fleanCoP\(+\)cut\(-\)conj & 1782 & 598 & 244 & 11796 & 3520 & 30668\tabularnewline
fleanCoP\(-\)cut\(+\)conj & 1617 & 499 & 192 & 7826 & 3849 & 35204\tabularnewline
fleanCoP\(-\)cut\(-\)conj & 1534 & 514 & 164 & 11115 & 3492 & 36334\tabularnewline
pleanCoP\(+\)cut\(+\)conj & 1673 & 606 & 182 & 11243 & 3664 & 35234\tabularnewline
pleanCoP\(+\)cut\(-\)conj & 1621 & 548 & 153 & 11227 & 3305 & 30416\tabularnewline
pleanCoP\(-\)cut\(+\)conj & 1428 & 453 & 143 & 7287 & 3671 & 34437\tabularnewline
pleanCoP\(-\)cut\(-\)conj & 1374 & 460 & 123 & 10442 & 3415 & 35499\tabularnewline
fnanoCoP\(+\)cut & 1724 & 511 & 192 & 12332 & 3178 & 30327\tabularnewline
fnanoCoP\(+\)cut\(-\beta_{<_L}\) & 1776 & 547 & 233 & 11197 & 3182 & 30216\tabularnewline
fnanoCoP\(-\)cut & 1567 & 542 & 151 & 13316 & 1993 & 37938\tabularnewline
fnanoCoP\(-\)cut\(-\beta_{<_L}\) & 1559 & 541 & 152 & 13173 & 1991 & 37923\tabularnewline
pnanoCoP\(+\)cut & 1585 & 480 & 112 & 11921 & 2970 & 30272\tabularnewline
pnanoCoP\(-\)cut & 1485 & 510 & 126 & 12943 & 1986 & 38015\tabularnewline

\bottomrule

\end{tabular}

\end{table}

The results are shown in \autoref{tab:table1}: The OCaml versions
clearly outperform the Prolog versions in almost all cases. The most
impressive result is achieved by leanCoP\(+\)cut\(+\)conj on the chainy
dataset: The OCaml version proves 58.8\% more problems than its Prolog
counterpart, thus even passing E. \(\beta_{<_L}\) seems to have an
effect mostly when cut is enabled. However, the result depends greatly
on the dataset: On the chainy dataset, disabling \(\beta _{<_L}\) solves
21.3\% more problems, but on the Miz40 dataset, it solves 8.8\% less.

We evaluated different proof search implementation styles in
\autoref{tab:clausal-impl} and \autoref{tab:nonclausal-impl}. Here,
inferences denote the number of successful unifications performed by
some prover on all problems within 10 seconds timeout. This metric is
not available for the Prolog versions, as they do not print the number
of inferences performed so far when prematurely terminated.

To measure the impact of the substitution structure, we evaluated the
best-performing implementation, i.e.~the stack-based one, using a
list-based substitution instead of an array-based substitution, see
\autoref{tab:clausal-impl}. This decreased the number of inferences by
50\%, showing that the performance of the substitution structure is
crucial for fast proof search.

Unless noted otherwise, we will use the stream-based implementation with
array-based substitution in the remainder of this paper.

\begin{table}

\caption{Impact of implementation on efficiency of clausal proof search
on the bushy MPTP2078 dataset with 10s timeout, restricted backtracking
(\(+\)cut), no definitional CNF, and conjecture-directed search
(\(+\)conj). \label{tab:clausal-impl}}

\begin{tabular}{lrr}

\toprule

Implementation & Solved & Inferences\tabularnewline

\midrule

Prolog & 606 & -\tabularnewline
Lazy list & 639 & 878199349\tabularnewline
Stack (list substitution) & 648 & 1253862954\tabularnewline
Stream & 670 & 1702827032\tabularnewline
Continuation & 681 & 2200272406\tabularnewline
Stack & 681 & 2490100879\tabularnewline

\bottomrule

\end{tabular}

\end{table}

\begin{table}

\caption{Impact of implementation on efficiency of nonclausal proof
search on the bushy MPTP2078 dataset with 10s timeout and restricted
backtracking (\(+\)cut). \label{tab:nonclausal-impl}}

\begin{tabular}{lrr}

\toprule

Implementation & Solved & Inferences\tabularnewline

\midrule

Prolog & 480 & -\tabularnewline
Lazy list & 504 & 374849495\tabularnewline
Streams & 511 & 495368962\tabularnewline

\bottomrule

\end{tabular}

\end{table}

\section{Consistent Skolemisation}\label{skolemisation}

First-order Skolemisation introduces new function symbols. For machine
learning algorithms, it is beneficial to introduce names
\emph{consistently} across problems, meaning that Skolem terms
originating from the same axiom in two different problems should be
syntactically equivalent. Consistent Skolemisation methods have been
studied in the context of the \(\delta\)-rule in tableaux methods, e.g.
\citep{beckert1993-delta}. \citep{giese1999-epsilon} pointed out that
the Skolem terms introduced may lead to rather large formulae, which can
be solved by structure sharing. However, in our setting, structure
sharing across different problems is not possible, which makes it
necessary to find different approaches. In previous work
\citep{kaliszyk2015-femalecop}, consistent Skolemisation was part of the
clausification procedure. The implementation of nonclausal proof search
motivated a more general consistent Skolemisation method. To recognise
the same Skolem term across different problems, it is necessary to
capture by the Skolem term only the part of the formula that defines the
existential variable. For this, we propose a consistent Skolemisation
method based on \(\epsilon\)-terms.

Let us first introduce the setting for this section: Let \(\Delta\) be a
rectified formula in negation normal form that is to be skolemised.
Furthermore, let the size of a formula \(F\) be the length of the string
representation of \(F\), and denote it by \(|F|\).

\begin{definition}[Skolemisation]The Skolemisation of a formula
\(\Delta\) yields a formula equisatisfiable to \(\Delta\), not
containing any existential quantifiers. To this end, Skolemisation
replaces any subterm in \(\Delta\) of the shape \(\exists x. F\) by
\(F[t/x]\), where \(t\) is called the Skolem term for
\(x\).\end{definition}

Replacing existentially quantified variables with \(\epsilon\)-terms
using the defining property of \(\epsilon\)-terms
\[\exists x. P(x) \leftrightarrow P(\epsilon x. P(x)),\] we obtain a
formula equivalent to the original one \citep{hilbert1939-grundlagen}.
However, recursively replacing existential quantifiers by
\(\epsilon\)-terms can lead to an exponential size of the skolemised
formula. To show this, we are going to define two kinds of
\(\epsilon\)-Skolemisation and show that they both produce exponentially
large output. In particular, the blowup is caused by the introduction of
new Skolem names that contain other Skolem names.

\begin{definition}Let \(F\) a subformula of \(\Delta\) such that \(F\)
is of the shape \(\exists x. G\). A \emph{naive
\(\epsilon\)-Skolemisation step} replaces \(F\) in \(\Delta\) by
\(G[(\epsilon x. G) / x]\).\end{definition}

Naive \(\epsilon\)-Skolemisation (N\(\epsilon\)S) of a formula
\(\Delta\) repeatedly applies naive \(\epsilon\)-Skolemisation steps
until the formula does not contain any more existential quantifiers. To
fix the order of Skolemisation steps, let outside-in N\(\epsilon\)S
replace subformulas only if they are not subformulas of an existential
quantification, and inside-out N\(\epsilon\)S replace subformulas only
if they do not have subformulas containing existential quantifiers. We
now give an example for which both outside-in and inside-out
N\(\epsilon\)S produce exponentially large skolemised
formulas.\footnote{Structure sharing would avoid the exponential blowup.}

\begin{example}Let \(\phi _n\) be a formula recursively defined by
\(\phi _0 = P(x_0, x_0)\) and
\(\phi _{n+1} = \exists x_n. (P(x_{n+1}, x_{n+1}) \rightarrow \phi _n)\).\end{example}

\begin{lemma}Inside-out N\(\epsilon\)S of \(\phi _n\) produces a formula
exponential in \(n\).\end{lemma}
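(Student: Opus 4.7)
The plan is to track, for each $k \le n$, the number $f_k$ of free occurrences of the topmost variable $x_k$ in the inside-out skolemised formula $\phi_k^*$, show that $f_k$ grows geometrically in $k$, and conclude $|\phi_n^*| \ge f_n$, which is exponential in $n$.

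First I would unfold one inside-out step. Since every existential in $\phi_k = \exists x_{k-1}.(P(x_k, x_k) \to \phi_{k-1})$ occurs either strictly inside $\phi_{k-1}$ or is the outermost one, inside-out N$\epsilon$S first produces $\phi_{k-1}^*$ and only then handles $\exists x_{k-1}$. Writing $G_k = P(x_k, x_k) \to \phi_{k-1}^*$ and $t_{k-1} = \epsilon x_{k-1}. G_k$, this yields $\phi_k^* = G_k[t_{k-1}/x_{k-1}]$.

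Next I would count occurrences. A straightforward induction on $k$ shows that $x_k$ is the only free variable of $\phi_k^*$, so $G_k$ contains $x_{k-1}$ exactly $f_{k-1}$ times (all inside $\phi_{k-1}^*$, since $P(x_k, x_k)$ contributes none) and $x_k$ exactly twice (from $P(x_k, x_k)$, since $\phi_{k-1}^*$ contributes none). Because $\epsilon x_{k-1}$ binds $x_{k-1}$ but not $x_k$, the Skolem term $t_{k-1}$ itself still contains $x_k$ twice. Substituting $t_{k-1}$ for each of the $f_{k-1}$ occurrences of $x_{k-1}$ in $G_k$ therefore gives
\[
  f_k = 2 + 2 f_{k-1},
\]
with base case $f_0 = 2$ from $\phi_0^* = P(x_0, x_0)$.

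Finally I would solve the recurrence: $f_k + 2 = 2(f_{k-1} + 2)$, hence $f_n + 2 = 2^n(f_0 + 2) = 2^{n+2}$ and so $|\phi_n^*| \ge f_n = 2^{n+2} - 2$, which is exponential in $n$. The only delicate point is the bookkeeping of free variables across the $\epsilon$-binders, handled by the induction in the previous paragraph; once that is in place, the blow-up is automatic, because each recursive level copies the Skolem term $t_{k-1}$ (which itself carries two occurrences of $x_k$) once per existing $x_{k-1}$ occurrence, doubling (up to the additive constant) the count from the previous level.
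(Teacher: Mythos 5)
Your proof is correct and follows essentially the same route as the paper: both unfold one inside-out step to get $sk(\phi_k) = P(x_k,x_k) \rightarrow sk(\phi_{k-1})[t_{k-1}/x_{k-1}]$ and derive exponential growth from the fact that a term containing two occurrences of $x_k$ is substituted for each occurrence of $x_{k-1}$. The only (harmless) difference is the quantity tracked: the paper bounds the size directly via $|sk(\phi_{k})| > 2|sk(\phi_{k-1})|$, while you count occurrences of the top free variable and solve the recurrence $f_k = 2 + 2f_{k-1}$ exactly, which is a slightly more explicit version of the same argument.
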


\begin{proof}Denote the inside-out N\(\epsilon\)S of \(\phi _n\) as
\(sk(\phi _n)\). Then \(sk(\phi _0) = P(x_0, x_0)\) and
\(sk(\phi _{n+1}) = P(x_{n+1}, x_{n+1}) \rightarrow sk(\phi _n)[t_n / x_n]\),
where
\(t_n = \epsilon x_n. (P(x_{n+1}, x_{n+1}) \rightarrow sk(\phi _n))\) is
the Skolem term corresponding to \(x_n\). For every \(n\),
\(sk(\phi _n)\) contains at least two occurrences of \(x_n\), and the
Skolem term \(t_n\) corresponding to \(x_n\) is larger than
\(|sk(\phi _n)|\). Therefore, for every \(n\),
\(|sk(\phi _{n+1})| > 2 |sk(\phi _n)|\).\end{proof}

\begin{lemma}Outside-in N\(\epsilon\)S of \(\phi _n\) produces a formula
exponential in \(n\).\end{lemma}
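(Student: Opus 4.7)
The plan is to mirror the inside-out argument but track the duplication of Skolem terms that happens when outside-in replacement substitutes the outer Skolem term \emph{before} the inner quantifier is processed. The key observation is that in $\phi_n$ the free variable $x_n$ occurs exactly twice, namely in the topmost atom $P(x_n, x_n)$, and these two occurrences force each outer Skolem term to appear twice inside the next inner Skolem term, giving a doubling recurrence.

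Concretely, I would unroll outside-in N$\epsilon$S starting from $\phi_n = \exists x_{n-1}.(P(x_n,x_n) \rightarrow \phi_{n-1})$. The first step introduces the Skolem term
\[ T_{n-1} = \epsilon x_{n-1}.(P(x_n,x_n) \rightarrow \phi_{n-1}) \]
and replaces $x_{n-1}$ by $T_{n-1}$ throughout the body. Since $x_{n-1}$ occurs in $\phi_{n-1}$ only in the topmost atom $P(x_{n-1},x_{n-1})$, the formula remaining under the next existential quantifier $\exists x_{n-2}$ contains $T_{n-1}$ twice. Hence the Skolem term introduced at the next step has the shape
\[ T_{n-2} = \epsilon x_{n-2}.(P(T_{n-1},T_{n-1}) \rightarrow \phi_{n-2}'), \]
and in particular $|T_{n-2}| \geq 2|T_{n-1}|$. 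I would then prove by induction on $k$ descending from $n-1$ to $0$ that the Skolem term $T_{k-1}$ produced at the $(n-k)$-th outside-in step satisfies $|T_{k-1}| \geq 2|T_k|$, using at each step the observation that $x_k$ occurs twice at the top of $\phi_k$ and nowhere else free.

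Iterating the recurrence gives $|T_0| \geq 2^{n-1}|T_{n-1}| \geq 2^{n-1}$. Since the final outside-in skolemised formula contains (at least one copy of) $T_0$, its total size is also exponential in $n$, establishing the lemma.

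The main obstacle, and the point that needs to be handled carefully, is the bookkeeping of free and bound variables: one must verify that $x_k$ is indeed the unique free variable of $\phi_k$, that it occurs exactly twice there, and that outside-in N$\epsilon$S really does substitute $T_k$ in before introducing $T_{k-1}$ (as opposed to inside-out, where the inner Skolemisation happens first and the duplication comes from $x_n$ appearing twice in the body of the outer quantifier). Once that is pinned down, the doubling recurrence is immediate and the exponential lower bound follows.
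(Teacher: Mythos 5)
Your proposal is correct and follows essentially the same route as the paper: both set up the recurrence $|s_n| > 2|s_{n+1}|$ (your $|T_{k-1}| \geq 2|T_k|$) on the sizes of the successively introduced Skolem terms, justified by the fact that each existential variable occurs exactly twice in the atom directly under its quantifier, and then iterate to get an exponential lower bound. The only cosmetic difference is that the paper closes the formula as $\Delta = \forall x_m.\,\phi_m$ before skolemising, whereas you work with $\phi_n$ and its free variable directly.
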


\begin{proof}Let \(\Delta = \forall x_m. \phi _m\) be the formula to be
skolemised. Then the Skolem terms corresponding to \(x_n\) can be given
by \[s_n =
  \begin{cases}
x_m & \text{if } n = m \\
\epsilon x_n. P(s_{n+1}, s_{n+1}) \rightarrow \phi _n & \text{otherwise}
  \end{cases}\] For any \(n < m\), because every Skolem term \(s_n\)
contains two occurrences of \(s_{n+1}\), we have that
\(|s_n| > 2|s_{n+1}|\). As the base case \(s_m\) is greater than zero,
we have that \(|s_0|\) is exponential in \(m\).\end{proof}

The example above motivates a new consistent Skolemisation method that
produces quadratic output and is also applicable to nonclausal search.
For this, let us define some notation first: \(\FVar(F)\) denotes the
free variables of \(F\), and \(\FVar_\forall(F)\) respectively
\(\FVar_\exists(F)\) denote the free variables in a subformula \(F\) of
\(\Delta\) that are universally respectively existentially bound in
\(\Delta\). \(\Delta^x\) is the subformula of \(\Delta\) that binds the
variable \(x\), i.e.~if \(\Delta\) has a subformula \(\exists x. F\),
then \(\Delta^x\) is \(\exists x. F\). Furthermore, the transitive
universal respectively existential free variables of \(F\) are denoted
by
\[\FVar_Q^*(F) = \FVar_Q(F) \cup \bigcup _{x \in \FVar_\exists(F)} \FVar_Q^*(\Delta^x),\]
where \(Q \in \{\forall, \exists\}\).

\begin{definition}[Consistent
\(\epsilon\)-Skolemisation]\label{def:ces}Let \(x\) be an existentially
quantified variable in \(\Delta\). The \(\epsilon\)-defining formula of
\(\Delta^x\) is
\(\max \{ \Delta^y \mid y \in \FVar_\exists^*(\Delta^x) \}\).\footnote{The
  maximum is computed with respect to the size of the formula, that is,
  it returns the largest formula in the set.} The consistent
\(\epsilon\)-Skolem term for \(x\) is \(\epsilon x. F\), where \(F\) is
the \(\epsilon\)-defining formula of \(\Delta^x\) with all quantifiers
for \(\FVar_\forall^*(\Delta^x)\) and \(x\) removed. Consistent
\(\epsilon\)-Skolemisation of a formula \(\Delta\) replaces any
subformula of \(\Delta\) of the shape \(\exists x. F\) by \(F[t / x]\),
where \(t\) is the consistent \(\epsilon\)-Skolem term for
\(x\).\end{definition}

\begin{example}Let \[\Delta =
\forall x_1 \exists y_1. (P(x_1, y_1) \rightarrow
   (\forall x_2 \exists y_2. P(x_2, y_2)) \land
   (\forall x_3 \exists y_3. Q(x_3, y_3, y_1))).\] The
\(\epsilon\)-defining formula of \(\Delta^{y_1}\) and \(\Delta^{y_3}\)
is \(\Delta^{y_1}\), and of \(\Delta^{y_2}\), it is \(\Delta^{y_2}\).
The consistent \(\epsilon\)-Skolem term for \(y_n\) is \(s_n\), where
\small
 \[\begin{aligned}
  s_1 &= \epsilon y_1. P(x_1, y_1) \rightarrow
(\forall x_2 \exists y_2. P(x_2, y_2)) \land
(\forall x_3 \exists y_3. Q(x_3, y_3, y_1)) &
\FVar(s_1) &= \{x_1\} \\
  s_2 &= \epsilon y_2. P(x_2, y_2) &
\FVar(s_2) &= \{x_2\} \\
  s_3 &= \epsilon y_3. \exists y_1. P(x_1, y_1) \rightarrow
(\forall x_2 \exists y_2. P(x_2, y_2)) \land
Q(x_3, y_3, y_1) &
\FVar(s_3) &= \{x_1, x_3\}
  \end{aligned}\]\end{example}

\begin{theorem}Consistent \(\epsilon\)-Skolemisation of a formula
\(\Delta\) yields a formula equivalent to \(\Delta\).\end{theorem}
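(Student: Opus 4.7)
The plan is to establish the theorem by induction on the number of existential quantifiers in $\Delta$, relying throughout on Hilbert's $\epsilon$-axiom $\exists x. P(x) \leftrightarrow P(\epsilon x. P(x))$. The base case, when $\Delta$ contains no existential quantifiers, is trivial since consistent $\epsilon$-Skolemisation leaves $\Delta$ unchanged. For the inductive step, I would select a ``leaf'' existential subformula $\exists x. G$ of $\Delta$, namely one with $\FVar_\exists(\Delta^x) = \emptyset$. For such a leaf, $\FVar_\exists^*(\Delta^x)$ reduces to $\{x\}$ (under the natural convention that the $\max$ over an otherwise empty set returns $\Delta^x$ itself), so the $\epsilon$-defining formula is $\Delta^x$ and the consistent $\epsilon$-Skolem term $t = \epsilon x. G'$---where $G'$ is $G$ with the universal quantifiers for $\FVar_\forall^*(\Delta^x)$ and the binder for $x$ removed---coincides essentially with the naive $\epsilon$-Skolem term. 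Equivalence of $\exists x. G$ with $G[t/x]$ then follows directly from the $\epsilon$-axiom, and by the substitutivity of equivalence under arbitrary propositional and quantifier contexts, replacing this subformula in $\Delta$ preserves its meaning.

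Next I would show that after Skolemising this leaf, the consistent $\epsilon$-Skolem terms for the remaining existentials in the modified formula $\Delta_1$ are syntactically the same as those that would be computed from the original $\Delta$. This invariance is the payoff of basing the $\epsilon$-defining formula on ancestor subformulas $\Delta^y$: those ancestors are untouched by the Skolemisation of a descendant $x$, so the $\max$ in Definition~\ref{def:ces} is stable, as are the sets $\FVar_\forall^*(\Delta^z)$ and $\FVar_\exists^*(\Delta^z)$ for any remaining existential $z$. Consequently, the consistent $\epsilon$-Skolemisation of $\Delta$ factors as the one-leaf replacement just described, followed by the consistent $\epsilon$-Skolemisation of $\Delta_1$. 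Applying the induction hypothesis to $\Delta_1$ then closes the argument.

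The main obstacle will be making the invariance step precise: once $x$ has been replaced by $t$, the subformula $\Delta^y$ for an ancestor $y$ is technically no longer the same syntactic expression, and one has to argue that the $\epsilon$-defining formula is nevertheless unchanged up to $\alpha$-renaming. A cleaner workaround is to pre-compute all consistent $\epsilon$-Skolem terms from the original $\Delta$ once and perform all substitutions simultaneously; then the induction becomes a bookkeeping exercise in which at each step we simply invoke the $\epsilon$-axiom to justify one substitution, and the consistency of the chosen terms is built into Definition~\ref{def:ces} rather than maintained during the inductive construction. A small auxiliary check is still needed to confirm that substituting one consistent Skolem term into another (e.g.\ the occurrence of the free variable $y_1$ inside a deeper term such as $s_3$ in the worked example) commutes with the $\epsilon$-axiom application, which follows from the fact that substitution into $\epsilon$-terms respects the defining property as long as no capture occurs---guaranteed here by the rectification assumption on $\Delta$.
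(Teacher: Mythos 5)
Your proposal has a genuine gap, and it sits exactly where the paper's proof does its real work. The defining property of $\epsilon$-terms justifies replacing $\exists x.\, G$ by $G[(\epsilon x.\, G)/x]$, i.e.\ it licenses the \emph{naive} Skolem term. But the consistent $\epsilon$-Skolem term for $x$ is $\epsilon x.\, D_2$, where $D_2$ is built from the $\epsilon$-defining formula $D = \Delta^y$ of some ancestor $y$; whenever $\Delta^x$ has existentially bound free variables, $D$ is strictly larger than $\Delta^x$ and $\epsilon x.\, D_2$ differs from $\epsilon x.\, G$. Nothing in your proposal establishes $\exists x.\, G \equiv G[(\epsilon x.\, D_2)/x]$ for such $x$. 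The paper's proof supplies precisely this step: it moves the universal quantifiers for $\FVar_\forall^*(\Delta^x)$ and then $\exists x$ to the front of $D$ (equivalence-preserving thanks to rectification), applies the $\epsilon$-axiom to the resulting $\forall \vec y_1 \exists x.\, D_2$, and moves the quantifiers back. Your leaf-first induction only ever invokes the $\epsilon$-axiom in the case where the naive and consistent terms coincide, and then leans on an invariance claim to cover the rest --- but that claim is false. Skolemising a ``leaf'' $x$ \emph{does} change $\Delta^y$ for every $y$ whose binder's scope contains $\Delta^x$: in the paper's example $y_1$ is a leaf in your sense, and after replacing it the subformula binding $y_3$ becomes $\exists y_3.\, Q(x_3, y_3, s_1)$, whose recomputed Skolem term $\epsilon y_3.\, Q(x_3, y_3, s_1)$ is not $\alpha$-equivalent to $s_3$. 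So consistent $\epsilon$-Skolemisation does not factor as one leaf step followed by consistent $\epsilon$-Skolemisation of the result, and your simultaneous-substitution workaround reduces back to the unproved equivalence above.

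A related misreading: $y_1$ occurs \emph{bound} (under the retained $\exists y_1$) in $s_3$, not free --- indeed $\FVar(s_3) = \{x_1, x_3\}$ --- so no Skolem term is ever substituted into another. Nesting terms inside terms is exactly the naive method whose exponential blowup the construction is designed to avoid; the point of keeping the ancestor's existential quantifier inside the $\epsilon$-term is that every consistent Skolem term then has only universally bound free variables, which is also what allows the paper to perform the replacements in arbitrary order at the end of its proof.
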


\begin{proof}Let us consider an arbitrary existentially quantified
variable \(x\) in \(\Delta\). Let \(D\) be the \(\epsilon\)-defining
formula of \(\Delta^x\). We can move all universal quantifiers
corresponding to \(\FVar_\forall^*(\Delta^x)\) in front of \(D\),
yielding \(\forall \vec y_1. D_1\). Note that
\(\FVar(D_1) = \FVar_\forall^*(\Delta^x)\) and
\(\vec y_1 = \FVar_\forall^*(\Delta^x) \setminus \FVar_\forall(D)\).
Furthermore, we can move the existential quantifier for \(x\) in front
of \(D_1\), resulting in \(\forall \vec y_1 \exists x. D_2\). Now, we
can use the defining property of \(\epsilon\)-terms to obtain
\(\forall \vec y_1. D_2[t / x]\), where \(t = \epsilon x. D_2\).
Finally, we can move back the quantifiers from \(\vec y_1\) to their
original places to yield \(D_3\). (\(D_3\) could have been equally
obtained by removing the quantifier \(\exists x\) from \(D\) and
replacing \(x\) by \(\epsilon x. D_2\).) As all free variables of every
such \(\epsilon\)-term \(t\) are universally quantified, we can execute
the operations above to replace in arbitrary order every existentially
quantified variable with its corresponding \(\epsilon\)-term. As all the
operations preserve equivalence of the formula, this yields a formula
equivalent to \(\Delta\), and we can see that it is exactly the outcome
of consistent \(\epsilon\)-Skolemisation.\end{proof}

\begin{lemma}Consistent \(\epsilon\)-Skolemisation of \(\Delta\) yields
a formula of size smaller than \(|\Delta|^2\).\end{lemma}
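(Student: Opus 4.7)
The plan is to decompose the skolemised formula into the part inherited from $\Delta$ and the total contribution of substituted Skolem terms, and bound each separately. The first and most important step is to bound the size of a single Skolem term $t_x$ by $|\Delta|$. Unlike in naive $\epsilon$-Skolemisation, where a Skolem term may itself contain previously introduced Skolem terms and thus blow up exponentially, here $t_x = \epsilon x. F$ is built from $F$, which is obtained from the $\epsilon$-defining formula of $\Delta^x$---a subformula of the \emph{original} $\Delta$---by deleting some quantifiers. In particular $t_x$ never contains another introduced $\epsilon$-term, so $|t_x| \leq |\Delta|$, since the $\epsilon x.$ prefix can be absorbed into the size of the outer $\exists y.$ prefix of $\Delta^y$ that it effectively replaces.

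Next I would bound the number of substitution sites. Consistent $\epsilon$-Skolemisation proceeds by replacing each subformula $\exists x. G$ of $\Delta$ by $G[t_x / x]$, which amounts to deleting every existential quantifier prefix and substituting every remaining occurrence of each existentially bound variable by its Skolem term. Writing $k$ for the total number of such occurrences in $\Delta$, we have $k \leq |\Delta|$ since each occurrence takes up at least one character of the string representation.

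Combining these two facts, the skolemised formula has size at most
\[
|\Delta| - k + k \cdot |\Delta| \;=\; |\Delta| + k(|\Delta|-1),
\]
where the $-k$ accounts for removing the original occurrences from $\Delta$ before replacing them. Substituting $k \leq |\Delta|$ already gives a bound of $|\Delta|^2 + |\Delta|$, which is $O(|\Delta|^2)$. The strict bound $<|\Delta|^2$ follows by observing that whenever $\Delta$ contains an existential quantifier at all, its binding prefix $\exists x.$ occupies characters of $\Delta$ that are not counted among the $k$ variable occurrences, so $k$ is strictly less than $|\Delta|$, and a small arithmetic manipulation sharpens the estimate accordingly.

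The main obstacle is the first step: formally justifying $|t_x| \leq |\Delta|$ requires checking that the $\epsilon$-defining formula is always a genuine subformula of $\Delta$ (immediate from its definition as a maximum over formulas of the form $\Delta^y$), and that replacing the outer $\exists y.$ of $\Delta^y$ by $\epsilon x.$ while stripping some interior $\forall$-quantifiers never grows the string representation. These checks are routine but are precisely the properties that distinguish consistent $\epsilon$-Skolemisation from the two naive variants whose exponential growth was established in the preceding lemmas. Once they are in place, the quadratic bound reduces to the straightforward counting argument above.
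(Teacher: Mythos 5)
Your argument is correct and is essentially the paper's own proof, just spelled out in more detail: the paper likewise observes that each consistent $\epsilon$-Skolem term has size at most $|\Delta|$ (since it is built from a subformula of the original $\Delta$, never from previously introduced Skolem terms) and that there are fewer than $|\Delta|$ occurrences of existentially quantified variables to replace, giving the bound by multiplication. Your extra care in justifying $|t_x|\leq|\Delta|$ and in the final arithmetic is a welcome elaboration rather than a different route.
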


\begin{proof}The maximal size of a consistent \(\epsilon\)-Skolem term
is \(|\Delta|\), and as there are less than \(|\Delta|\) occurrences of
existentially quantified variables in \(\Delta\), replacing them yields
a formula of size smaller than \(|\Delta|^2\).\end{proof}

To obtain a first-order formula from consistent
\(\epsilon\)-Skolemisation, we replace every consistent
\(\epsilon\)-Skolem term of the shape \(\epsilon x. F\) with
\(f_{[\epsilon x. F]}(\vec y)\), where \(\vec y\) is
\(\FVar(\epsilon x. F)\) (and thus also \(\FVar_\forall^*(\Delta^x)\)).
Here, \([t]\) denotes a normalisation of \(t\) such that for any \(t_1\)
and \(t_2\), if \(t_1\) is \(\alpha\)-equivalent to \(t_2\), then
\([t_1] = [t_2]\). We use this normalisation to recognise equivalent
Skolem terms even if their variables are differently named. Due to the
introduction of new function symbols, this yields a formula
equisatisfiable to \(\Delta\).

The following corollary states under which conditions variables that are
existential bound at different locations will be consistently skolemised
to the same function symbol. Note that this result holds across
different formulas and problems.

\begin{corollary}[Consistency]Two existential variables are mapped to
the same first-order function symbol iff their corresponding
\(\epsilon\)-Skolem terms are \(\alpha\)-equivalent.\end{corollary}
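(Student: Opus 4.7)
The plan is to derive this essentially from the definition of the normalisation operator $[\cdot]$ used when converting $\epsilon$-Skolem terms to first-order function symbols. Recall that for an existentially quantified variable $x$ in $\Delta$, consistent $\epsilon$-Skolemisation first produces the term $\epsilon x.\, F$ (with $F$ the $\epsilon$-defining formula of $\Delta^x$, stripped of the indicated quantifiers), and then replaces it by the first-order application $f_{[\epsilon x.\,F]}(\vec y)$, where $\vec y = \FVar(\epsilon x.\, F)$. Thus the function symbol attached to $x$ is, by construction, completely determined by the equivalence class $[\epsilon x.\, F]$ under the normalisation map.

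From here I would argue both directions of the iff directly. For the ``if'' direction, assume the two $\epsilon$-Skolem terms $t_1 = \epsilon x_1.\, F_1$ and $t_2 = \epsilon x_2.\, F_2$ are $\alpha$-equivalent. Then $[t_1] = [t_2]$ by the defining property of the normalisation, and so both variables are mapped to the same function symbol $f_{[t_1]} = f_{[t_2]}$. For the ``only if'' direction, suppose the two variables are mapped to a common symbol $f_{[t_1]} = f_{[t_2]}$. Since the indexing is injective on $\alpha$-equivalence classes (symbol equality is symbol equality of the indices), $[t_1] = [t_2]$, which, again by the property of $[\cdot]$, means $t_1$ and $t_2$ are $\alpha$-equivalent.

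The only substantive observation beyond unfolding definitions is that this conclusion holds \emph{across different source formulas}: the function symbol assigned to $x$ depends on $\Delta$ only through the $\epsilon$-defining formula of $\Delta^x$ and its free universal variables, which are packaged into the single $\epsilon$-term $\epsilon x.\, F$. Hence two existential variables occurring in two unrelated problems receive the same symbol exactly when their locally computed $\epsilon$-Skolem terms agree up to renaming of bound variables. I expect the main ``obstacle'' to be purely expository, namely being explicit that $f_{[\cdot]}$ is a well-defined injection on $\alpha$-classes of $\epsilon$-terms; once this is stated, the corollary is immediate from \autoref{def:ces} and the construction described above it.
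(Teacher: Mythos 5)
Your proposal is correct and matches the paper's proof, which simply states that the corollary follows from the definition of consistent $\epsilon$-Skolemisation and the $f_{[\cdot]}$ naming construction. Your explicit remark that the normalisation $[\cdot]$ must be injective on $\alpha$-equivalence classes (not merely constant on them, which is the only direction the paper states) is a useful clarification of the ``only if'' direction, but it does not change the argument.
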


\begin{proof}Follows from \autoref{def:ces}.\end{proof}

\section{Naive Bayesian Internal Guidance}\label{bayesguidance}

The order in which extension steps are tried can have a significant
effect on the performance of proof search. In this section, we propose
the use of Naive Bayesian probability to \emph{guide} the use of
extension steps based on an intermediate proof state.

We generally call methods that implement guidance inside ATPs
\emph{internal guidance} methods. In the particular case of machine
learning guidance, such methods are historically motivated by the
relative success of the \emph{external guidance} methods used mainly for
premise selection outside of the core ATP systems
\citep{blanchette2016-qed, urban2008-malarea, kaliszyk2014-holyhammer}.
Internal guidance methods aim to estimate the utility of actions
according to the system's knowledge of the world and previous
experiences. In our setting, a positive experience is when in a certain
tableau branch, an extension step used a contrapositive that contributed
to the final proof. We use this information to prefer contrapositives in
branches that are similar to branches where the contrapositives were
previously useful.

To measure the similarity between tableau branches, we characterise them
by \emph{features} \citep{kaliszyk2015-features}, which we explain in
\autoref{features}. In \autoref{naive-bayes}, we then calculate the
utility of a contrapositive in the current branch, given knowledge about
its utility in previous proofs. In \autoref{implementations}, we
motivate the integration of machine learning methods in the prover and
introduce the prover FEMaLeCoP, which we evaluate in \autoref{nb-eval}.

\subsection{Tableau Branch Characterisation}\label{features}

The words of the connection tableaux calculus
\(\langle C, M, Path \rangle\) correspond to a set of tableau branches
sharing the active \(Path\). Therefore, to characterise a branch, we use
as its \emph{features} the set of symbols occurring in the active path.
We weigh the symbols by the number of times they appeared in all
problems, giving higher weight to rarer symbols via \emph{inverse
document frequency} \citep{jones1973-idf}, as well as by the distance
between the current depth and the depth the symbols where put onto the
path, giving higher weight to symbols more recently processed.

\subsection{Naive Bayes}\label{naive-bayes}

Given a set of contrapositives that are applicable in a tableau branch,
we wish to obtain an ordering of the contrapositives such that trying
the contrapositives in the given order minimises the time spent to find
a proof. In this subsection, we show how to order the set of applicable
contrapositives by a formula \emph{\(\nb\)} that is based on Naive
Bayesian probability, as used for premise selection
\citep{kaliszyk2015-mizar40}.

First, let us denote the knowledge about the usage of contrapositives in
previous proofs by \(F(l_i)\), which is the multiset of sets of features
characterising the tableau branches in which the usage of \(l_i\)
contributed to the final proof. \(|F(l_i)|\) is the total number of
times that \(l_i\) was used in previous proofs.

\begin{example}\(F(l_1) = \left\{\left\{f_1, f_2\right\}, \left\{f_2, f_3\right\}\right\}\)
means that the contrapositive \(l_1\) was used twice in previous proofs;
once in a proof state characterised by the features \(f_1\) and \(f_2\),
and once when features \(f_2\) and \(f_3\) were present.\end{example}

Let \(P(l_i, \vec f)\) denote the probability that a contrapositive
\(l_i\) from a set \(\vec l\) of applicable contrapositives is useful in
a tableau branch characterised by features \(\vec f\). Using Bayes'
theorem together with the (naive) assumption that features are
statistically independent, we derive
\[P(l_i \mid \vec f) = \frac{P(l_i) P(\vec f \mid l_i)}{P(\vec f)} =
  \frac{P(l_i)}{P(\vec f)} \prod _ {f_j \in \vec f} P(f_j \mid l_i).\]
To increase numerical stability, we calculate the logarithm of the
probability
\[\ln P(l_i \mid \vec f) = \ln P(l_i) - \ln P(\vec f) + \sum _ {f_j \in \vec f} \ln P(f_j \mid l_i).\]
In the final formula \(\nb(l_i, \vec f)\) to rank contrapositives, we
modify \(\ln P(l_i \mid \vec f)\) as follows:

\begin{itemize}
\tightlist
\item
  We add a term to disadvantage features not present in \(\vec f\) that
  occurred in previous situations where the contrapositive was useful.
\item
  We weigh the probability of any feature \(f\) by its inverse document
  frequency \(i(f)\) to give more weight to rare features.
\item
  We drop the term \(\ln P(\vec f)\), as we compare only values for
  fixed features \(\vec f\).
\item
  We weigh the individual parts of the sum with constants \(\sigma_1\),
  \(\sigma_2\) and \(\sigma_3\).
\end{itemize}

The resulting formula is \[\nb(l_i, \vec f) =
\sigma_1 \ln P(l_i) +
\sigma_2 \sum _ {f_j \in \vec f} i(f_j) \ln P(f_j \mid l_i) +
\sigma_3 \sum _ {f_j \in \bigcup F(l_i) \setminus \vec f} i(f_j) \ln (1 - P(f_j \mid l_i))\]

We are now going to describe how to calculate \(P(l_i)\) as well as
\(P(f_j \mid l_i)\). First, we calculate the unconditional
contrapositive probability as
\[P(l_i) = \frac{\left| F(l_i) \right|}{\sum _ {l_j \in \vec l} \left| F(l_j) \right|}.\]
In practice, as the denominator of the fraction is the same for all
\(l_i\), we drop it, similarly to \(P(\vec f)\) above. To obtain the
conditional feature probability, we distinguish whether a feature
\(f_j\) already appeared in conjunction with a contrapositive \(l_i\).
If so, then its probability is the ratio of times \(f_j\) appeared when
\(l_i\) was used and all times that \(l_i\) was used. Otherwise, the
probability is estimated to be a minimal constant probability \(\mu\):
\[P(f_j \mid l_i) =
\begin{cases}
  \sum _ {\vec{f'} \in F(l_i)} 1 _ {\vec{f'}}(f_j) / \left| F(l_i) \right|
  & \text{if } \exists \vec{f'} \in F(l_i). f_j \in \vec{f'} \\
  \mu & \text{otherwise}
\end{cases}\] Here, \(1_A(x)\) denotes the indicator function that
returns 1 if \(x \in A\) and 0 otherwise.

\subsection{Implementations}\label{implementations}

The \emph{Machine Learning Connection Prover} (MaLeCoP) was the first
leanCoP-based system to explore the feasibility of machine-learnt
internal guidance \citep{urban2011-malecop}. MaLeCoP relies on an
external machine learning framework (using by default the SNoW system
\citep{carlson1999-snow}), providing several machine learning
algorithms, namely Naive Bayes and shallow neural networks based on
perceptrons or winnow cells. During proof search, MaLeCoP sends features
of its current branch to the framework, which orders the proof steps
applicable in the current branch by their expected utility. The usage of
a general framework eases experiments with different methods, but the
prediction speed of MaLeCoP's underlying advisor system together with
the communication overhead was several orders of magnitude lower than
the raw inference speed of leanCoP. This was to some extent countered by
fast query caching mechanisms and a number of strategies trading the
machine-learnt advice for raw speed, yet the real-time performance of
the system remained relatively low.

This motivated the creation of the \emph{Fairly Efficient Machine
Learning Connection Prover} (FEMaLeCoP), which improved speed by
integrating a fast and optimised Naive Bayesian classifier as shown in
\autoref{naive-bayes} into the prover \citep{kaliszyk2015-femalecop}.
Naive Bayes was chosen because learning data can be easily filtered for
the current problem, making the calculation of Naive Bayesian
probabilities for a given branch efficient for each applicable
contrapositive. FEMaLeCoP efficiently calculates the Bayesian
probabilities of a given set of contrapositives by saving contrapositive
statistics directly in the literal database. Performance is further
improved as branch features are not fully recalculated in every new
branch, but updated from the previous branch.

\subsection{Evaluation}\label{nb-eval}

The evaluation involves generation of training data with leanCoP,
followed by running FEMaLeCoP with the training data on the same
problems. We run both leanCoP and FEMaLeCoP on the bushy MPTP2078
dataset with a timeout of 60s, nondefinitional clausification,
conjecture-directed search and restricted backtracking. Both leanCoP and
FEMaLeCoP considered in this evaluation are implemented in OCaml using
continuation passing style, see \autoref{continuations}.

leanCoP proves 574 problems.\footnote{The leanCoP version evaluated here
  uses a different clause order than the versions in
  \autoref{proof-search}, which explains the different baseline
  performance.} From the resulting proofs, the information is extracted
which contrapositive contributed in which tableau branch. This
information is combined for all proofs to a format that allows efficient
retrieval of learning data for given contrapositives.

With the training data generated from the leanCoP proofs, we run
FEMaLeCoP on the same problems as leanCoP. While the inference rate
drops by about 40\%, FEMaLeCoP proves 640 problems. The union of leanCoP
and FEMaLeCoP proves 664 problems, adding 90 problems (15.7\%) to the
problems solved by leanCoP. A more thorough evaluation of Naive Bayesian
machine learning integrated with other internal guidance components will
be performed in the next section.

\section{Monte Carlo Proof Search}\label{monte-carlo-proof-search}

In this section, we describe how to expand a proof search tree using
Monte Carlo Tree Search (MCTS).

For an intuition of the relationship between different proof search
strategies, see \autoref{fig:leancop-montecop}: Iterative deepening
considers all potential proof trees of a certain depth before
considering trees of higher depth. Restricted backtracking uniformly
discards a set of potential proof trees. MCTS allows for a more
fine-grained proof search, searching different regions of the search
space more profoundly than others, based on heuristics.

We introduce MCTS and propose a set of heuristics adapted to proof
search. Then, we show an implementation of the method, closing with an
evaluation.

\begin{figure}
  \subfloat[Iterative deepening without restricted backtracking.]{%
    \resizebox{0.3\columnwidth}{!}{\begin{tikzpicture}

\filldraw[fill=black!70!white,draw=black]
  (0,0) -- ({-0.5*7}, -7) -- ({-0.25*7}, -7) -- ({-0.25*6}, -6) -- ({0.5*6}, -6) -- cycle;

\input{tikz/iterdeep-common}

\end{tikzpicture}}
  }
  \hfill
  \subfloat[Iterative deepening with restricted backtracking.]{%
    \resizebox{0.3\columnwidth}{!}{\begin{tikzpicture}

\filldraw[fill=black!70!white,draw=black]
  (0,0) -- ({-0.2*9}, -9) -- ({-0.1*9}, -9) -- ({-0.1*8}, -8) -- ({0.1*8}, -8) -- cycle;

\input{tikz/iterdeep-common}

\end{tikzpicture}}
  }
  \hfill
  \subfloat[Monte Carlo.]{%
    \resizebox{0.3\columnwidth}{!}{\begin{tikzpicture}

\filldraw[fill=black!70!white,draw=black]
  (0,0) --
  ({-0.5*4}, -4) -- ({-0.4*4}, -4) --
  ({-0.4*6}, -6) -- ({-0.3*6}, -6) --
  ({-0.3*5}, -5) -- ({-0.2*5}, -5) --
  ({-0.2*8}, -8) -- ({-0.1*8}, -8) --
  ({-0.1*2}, -2) -- ({ 0.0*2}, -2) --
  ({ 0.0*7}, -7) -- ({ 0.1*7}, -7) --
  ({ 0.1*6}, -6) -- ({ 0.2*6}, -6) --
  ({ 0.2*8}, -8) -- ({ 0.3*8}, -8) --
  ({ 0.3*4}, -4) -- ({ 0.4*4}, -4) --
  ({ 0.4*6}, -6) -- ({ 0.5*6}, -6) --
   cycle;

\input{tikz/iterdeep-common}

\end{tikzpicture}}
  }
  \caption{The two main leanCoP strategies compared with Monte Carlo proof search.}
  \label{fig:leancop-montecop}
\end{figure}

\subsection{Monte Carlo Tree Search}\label{mcts}

Monte Carlo Tree Search (MCTS) is a method to search potentially
infinite trees by sampling random tree paths (called \emph{simulations})
\citep{browne2012-survey}. The outcome of simulations is then used to
estimate the quality of tree nodes, and MCTS steers search towards nodes
with higher quality estimates.

\begin{definition}[Tree]A \emph{tree} is a tuple
\((N, n_0, \rightarrow)\), where \(N\) is a set of tree nodes,
\(n_0 \in N\) is the root node, and \(\rightarrow \; \in N \times N\) is
a cycle-free relation, i.e.~there is no \(n \in N\) such that
\(n \rightarrow^+ n\).\footnote{\(\rightarrow ^+\) is the transitive
  closure of \(\rightarrow\).} We write that \(n'\) is a child of \(n\)
iff \(n \rightarrow n'\). Every \(n \in N\) is the child of maximally
one node in \(N\).\end{definition}

We consider proof search as traversal of a (usually infinite) tree
\((N, n_0, \rightarrow)\), such that \(N\) is the set of derivations
(tableaux), \(n_0\) is a derivation that consists of some word
\(\langle C, M, Path \rangle\) corresponding to the matrix \(M\) of a
given problem, and \(n \rightarrow n'\) iff \(n'\) can be obtained from
\(n\) by applying a single calculus rule. If \(n \rightarrow n'\) by a
single application of the extension rule using the contrapositive \(c\),
then we write \(n \xrightarrow{\ext(c)} n'\). Proof search succeeds when
we find a leaf node of \(N\) that is a proof.

Let \(\rho \in N \rightarrow \mathbb{R}\) be a \emph{reward function}
that estimates the distance of an unclosed derivation in the proof
search tree from a closed derivation. Then we can use Monte Carlo Tree
Search to traverse the proof search tree, giving preference to regions
that yield higher rewards. For this, we first define Monte Carlo trees:

\begin{definition}[Monte Carlo tree]A \emph{Monte Carlo tree} \(T\) for
a tree \((N, n_0, \rightarrow)\) is a tuple
\((N_T, \rightarrow _T, \rho _T)\), where
\(\rightarrow _T \: \subseteq \: \rightarrow ^+\) and
\(\rho _T \in N \rightarrow \mathbb{R}\) is a mapping. We write that
\(n'\) is a \(T\)-child of \(n\) iff \(n \rightarrow _T n'\). The
\emph{initial Monte Carlo tree} \(T_0\) is
\((N_{T_0}, \rightarrow _{T_0}, \rho _{T_0})\) with
\(N_{T_0} = \{n_0\}\), \(\rightarrow _{T_0} = \emptyset\) and
\(\rho _{T_0}(n) = 0\) for all \(n\).\end{definition}

A single iteration of Monte Carlo Tree Search takes a Monte Carlo tree
\(T\) and returns a new tree \(T'\) as follows:\footnote{Frequently,
  MCTS is described to have a \emph{backpropagation} step that adds
  rewards to the ancestors of the newly added nodes. We omit this step,
  adapting the child selection policy instead.}

\begin{enumerate}
\def\labelenumi{\arabic{enumi}.}
\tightlist
\item
  \textbf{Selection}: A node \(n \in N_T\) with
  \(n_0 \rightarrow^* _T n\) is chosen with a \emph{child selection
  policy}, see \autoref{childselpol}.
\item
  \textbf{Simulation}: A child \(n_1\) of \(n\) is randomly chosen with
  child probability \(P(n_1 \mid n)\) to be the \emph{simulation root},
  see \autoref{childprob}. (Every tree node is chosen maximally once to
  be a simulation root, to guarantee the exploration of the tree.) From
  \(n_1\), a sequence of random transitions
  \(n_1 \rightarrow \dots \rightarrow n_s\) is performed, where for
  every \(i < s\), \(n_{i+1}\) is randomly selected with child
  probability \(P(n_{i+1} \mid n_i)\).
\item
  \textbf{Expansion}: A node \(n_e\) from
  \(n_1 \rightarrow \dots \rightarrow n_s\) is selected with the
  \emph{expansion policy}, see \autoref{exppol}. The node \(n_e\) is
  added as a child to \(n\) with reward \(\rho(n_s)\) (see
  \autoref{reward}) to yield the new tree \(T'\):
  \[N_{T'} = N_T \cup \{n_e\} \qquad
     \rightarrow _{T'} \; = \; \rightarrow _T \cup \; \{(n, n_e)\} \qquad
     \rho _{T'} = \rho _T \{n_e \mapsto \rho (n_s)\}\]
\end{enumerate}

In the next sections, we show heuristics for the child selection policy,
child probability, reward, and expansion policy.

\subsection{Child Selection Policy}\label{childselpol}

UCT (Upper Confidence Bounds for Trees) is a frequently used child
selection policy for Monte Carlo Tree Search \citep{kocsis2006-uct}. It
uses \(\visits_T(n)\), which is the number of \(T\)-descendants of
\(n\), and \(\overline \rho _T(n)\), which is the average
\(T\)-descendant reward of \(n\).
\[\visits_T(n) = |\{n' \mid n \rightarrow^+_T n' \}| \qquad \qquad
\overline \rho _T(n) =
\frac
{\sum \{ \rho _T(n') \mid n \rightarrow^*_T n' \}}
{\visits_T(n)}\] Given a node \(n\), UCT ranks every \(T\)-child \(n'\)
of \(n\) with
\[\uct(n, n') = \overline \rho _T(n') + C_p \sqrt \frac{\ln \visits_T(n)}{\visits_T(n')}.\]
Here, \(C_p\) is called the \emph{exploration constant}, where small
values of \(C_p\) prefer nodes with higher average descendant reward and
large values of \(C_p\) give more weight to nodes with fewer visits. In
the UCT formula, division by zero is expected to yield \(\infty\), so if
a node \(n\) has unvisited children, one of them will be selected by
UCT.

The UCT child selection policy \(cs_T(n)\) recursively traverses the
Monte Carlo tree \(T\) starting from the root \(n_0\). \(cs_T(n)\)
chooses the \(T\)-child of \(n\) with maximal UCT value and recurses,
unless \(n\) has no \(T\)-child, in which case \(n\) is returned:

\[cs_T(n) = \begin{cases}
cs_T\left( \argmax_{\{n' \mid n \rightarrow _T n'\}} \uct(n, n') \right)
  & \text{if } \exists n'. \; n \rightarrow _T n' \\
n & \text{otherwise}
\end{cases}\]

\subsection{Child Probability}\label{childprob}

The child probability \(P(n' \mid n)\) determines the likelihood of
choosing a child node \(n'\) of \(n\) in a simulation. We show three
different methods to calculate the child probability.

\begin{itemize}
\tightlist
\item
  The \emph{baseline probability} assigns equal probability to all
  children, i.e. \(P(n' \mid n) \propto 1\).
\item
  The \emph{open branches probability} steers proof search towards
  derivations with fewer open branches, by assigning to \(n'\) a
  probability inversely proportional to the number of open branches in
  \(n'\). Therefore, \(P(n' \mid n) \propto 1 / (1 + |b_o(n')|)\), where
  \(b_o(n)\) returns the open branches in \(n\).
\item
  The \emph{Naive Bayes probability} attributes to \(n'\) a probability
  depending the calculus rule applied to obtain \(n'\) from \(n\): In
  case the extension rule was not used, the node obtains a constant
  probability. If the extension rule was used, the formula \(\nb\)
  introduced in \autoref{naive-bayes} is used, requiring contrapositive
  statistics from previous proofs. However, as \(\nb\) is not a
  probability, we use it to rank contrapositives by the number of
  contrapositives with larger values of \(\nb\):
  \[\rank_{\nb}(n, c) = \left| \left\{
    c' \mid n \xrightarrow{\ext(c')} n', \nb(c', \vec f(n)) \geq \nb(c, \vec f(n))
    \right\} \right|,\] where \(\vec f(n)\) denotes the features of the
  derivation \(n\). Then, we assign to nodes as probability the inverse
  of the Naive Bayes rank: \[P(n' \mid n) \propto \begin{cases}
  1 / \rank_{\nb}(n, c) & \text{if } n \xrightarrow{\ext(c)} n' \\
  1 & \text{otherwise}
    \end{cases}\]
\end{itemize}

\subsection{Reward}\label{reward}

The reward heuristic estimates the likelihood of a given derivation to
be closable. This is in contrast to most prover heuristics (such as
child probability) that only compare the quality of children of the same
node. We use our reward heuristics to evaluate the last node \(n\) of a
simulation.

Several heuristics in this section require a normalisation function, for
which we use a strictly increasing function
\(\norm : [0, \infty) \to [0, 1)\) that fulfils
\(\lim_{x \to \infty} \norm(x) = 1\) and \(\norm(0) = 0\). For example,
\(\norm(x) = 1 - (x + 1)^{-1}\).

\begin{itemize}
\tightlist
\item
  The \emph{branch ratio reward} determines the reward to be the ratio
  of the number of closed branches and the total number of branches,
  i.e. \(\rho(n) = |b_c(n)| / |b(n)|\).
\item
  The \emph{branch weight reward} is based on the idea that many open
  branches with large literals are indicators of a bad proof attempt.
  Here, the size \(|l|\) of a literal is measured by the number of
  symbol occurrences in \(l\). Furthermore, the closer to the derivation
  root a literal appears, the more characteristic we consider it to be
  for the derivation. Therefore, the reward is the average of the
  inverse size of the branch leafs, where every leaf is weighted with
  the normalised depth of its branch. \[\rho(n) = \frac{1}{|b_o(n)|}
    \sum _{b \in b_o(n)} \frac{\norm(\depth(b))}{|\leaf(b)|}\]
\item
  The \emph{machine-learnt closability reward} assumes that the success
  ratio of closing a branch in previous derivations can be used to
  estimate the probability that a branch can be closed in the current
  derivation. This needs the information about attempted branches in
  previous derivations, and which of these attempts were successful. We
  say that a literal \(l\) stemming from a clause \(c\) is attempted to
  be closed during proof search when \(l\) lies on some branch. The
  attempt is successful iff proof search manages to close all branches
  going through \(l\). Given such data from previous proof searches, let
  \(p(l)\) and \(n(l)\) the number of successful respectively
  unsuccessful attempts to close \(l\). We define the
  \emph{unclosability} of a literal \(l\) as
  \(\frac{n(l)}{p(l) + n(l)}\). However, the less data we have about a
  literal, the less meaningful our statistics will be. To account for
  this, we introduce \emph{weighted unclosability}: We assume that a
  literal that never appeared in previous proof searches is most likely
  closable, i.e.~its weighted unclosability is 0. The more often a
  literal was attempted to be closed, the more its weighted
  unclosability should converge towards its (basic) unclosability.
  Therefore, we model the probability of \(l\) to be closable as
  \[P(l \closable) = 1 - \norm(p(l)+n(l)) \frac{n(l)}{p(l)+n(l)}.\]
  Finally, the closability of a derivation is the mean closability of
  all leafs of open branches of the derivation, i.e.~the final reward
  formula is
  \(\rho(n) = \sum _{b \in b_o(n)} P(\leaf(b)\, \closable) / |b_o(n)|\).
\end{itemize}

To measure the efficiency of a reward heuristic, we introduce
\emph{discrimination}: Assume that an MCTS iteration of the Monte Carlo
tree \(T\) starts a simulation from the node \(n_p\) and finds a proof.
Then the discrimination of \(T\) is the ratio of the average reward on
the Monte Carlo tree branch from the root node \(n_0\) to \(n_p\) and
the average reward of all Monte Carlo tree nodes. Formally, let the
average reward of a set of nodes \(N\) be
\(\overline \rho _T(N) = \sum \{\rho _T(n) \mid n \in N \} / |N|\). Then
the discrimination of \(T\) is \[\frac
{\overline \rho _T(\{ n \mid n_0 \to _T^* n, n \to _T^* n_p \})}
{\overline \rho _T(\{ n \mid n_0 \to _T^* n \})}\]

\subsection{Expansion Policy}\label{exppol}

The expansion policy determines which node \(n_e\) of a simulation
\(n_1 \rightarrow \dots \rightarrow n_s\) is added to the Monte Carlo
tree. We implement two different expansion policies:

\begin{itemize}
\tightlist
\item
  The \emph{default expansion policy} adds \(n_1\), i.e.~the simulation
  root, to the MC tree.
\item
  The \emph{minimal expansion policy} picks \(n_e\) to be the smallest
  of the simulation nodes w.r.t. a given norm \(|\cdot|\), such that for
  all \(i\), \(|n_e| \leq |n_i|\). If multiple \(n_e\) are admissible,
  the one with the smallest index \(e\) is picked. We consider two norms
  on nodes:

  \begin{enumerate}
  \def\labelenumi{\arabic{enumi}.}
  \tightlist
  \item
    The first norm measures the number of open branches.
  \item
    The second norm measures the sum of depths of open branches.
  \end{enumerate}
\end{itemize}

The minimal expansion policy is similar to restricted backtracking in
the sense that it restricts proof search to be resumed only from certain
states, thus resulting in an incomplete search.

\subsection{Implementation}\label{implementation}

We implemented Monte Carlo proof search (MCPS) based on leanCoP, where
leanCoP builds the search tree and MCTS chooses which regions of the
tree to search. Unlike for the traditional leanCoP, the depth of the
search tree is not limited. To guarantee nonetheless that simulations
terminate, simulations are stopped after a fixed number of simulation
steps \(s_{\max}\).

While it is possible to run MCPS from the root node until a proof is
found, we have found it to perform better when it serves as
\emph{advisor} for leanCoP. We show this in \autoref{lst:montecop},
assuming for a simpler presentation that the default expansion policy
from \autoref{exppol} is used: MCPS as performed by
\lstinline!mcps lit path sub! attempts to close the branch containing
the literals in \lstinline!path! and the literal \lstinline!lit! as
leaf. The result is a lazy list \lstinline!mc! of Monte Carlo
iterations, where an iteration consists of a Monte Carlo tree and
possibly a proof discovered during the simulation performed in the
iteration. The first \lstinline!maxIterations! are considered: When
\lstinline!maxIterations! is set to \(0\), proof search behaves like
leanCoP, and in case it is set to \(\infty\), the whole proof search is
performed in the MCPS part. As MCPS is performed lazily, MCPS is
performed for less than \lstinline!maxIterations! iterations when it
discovers some proof contributing to the final closed derivation. Here,
the lazy list characterisation introduced in \autoref{lazylist} turns
out to be permit a very concise implementation as well as an easy
integration of techniques such as restricted backtracking. As soon as
all proofs discovered during MCPS were considered, the tree \(T\) of the
final Monte Carlo iteration \lstinline!last mc! is obtained and the
children of the root of \(T\) are sorted by decreasing average
\(T\)-descendant reward \(\overline \rho _T\). Finally, the last applied
proof step of each child is processed like in the lazy list
implementation.

\begin{lstlisting}[language=Haskell, caption=Monte Carlo Proof Search as advisor., label=lst:montecop]
prove [] path lim sub = [sub]
prove (lit : cla) path lim sub =
  let
    mc = take maxIterations (mcps lit path sub)
    proofs1 = mapMaybe getProof mc
    proofs2 = last mc & root & children & sortOn avgReward & concatMap
      (\ child -> case lastStep child of
        Reduction sub1 -> [sub1]
    Extension (sub1, cla1) ->
          if lim <= 0 then []
          else prove cla1 (lit : path) (lim - 1) sub1)
  in concatMap (prove cla path lim) (proofs1 ++ proofs2)
\end{lstlisting}

The array substitution technique from \autoref{unification} requires
that the proof search always backtracks only to states whose
substitution is a subset of the current state's substitution. However,
because this requirement is not fulfilled for MCPS, we use association
lists for substitutions.

\subsection{Evaluation}\label{evaluation}

We evaluated the presented heuristics on the bushy MPTP2078 problems,
with definitional clausification and a timeout of 10s for each problem.
Before evaluation, we collected training data for the machine learning
heuristics by running leanCoP on all bushy problems with a timeout of
60s.

The base configuration of monteCoP uses the open branches probability
(see \autoref{childprob}), the branch ratio reward (see
\autoref{reward}), and the minimal expansion policy 1 (see
\autoref{exppol}), where the maximal simulation depth \(s_{\max} = 50\),
the exploration constant \(C_p = 1\), and the maximal number of MCTS
iterations \lstinline!maxIterations! \(= \infty\). For any heuristic
\(h\) not used in the base configuration, we replaced the default
heuristic with \(h\) and evaluated the resulting configuration. The
results are shown in \autoref{tab:montecarlo-results}. We can see that
the heuristics that most improve the base configuration are the
machine-learnt closability reward and the minimal expansion policy 2.

\begin{table}

\caption{Comparison of Monte Carlo heuristics. Iterations, simulation
steps and discrimination ratio are averages on the 196 problems solved
by all configurations. \label{tab:montecarlo-results}}

\begin{tabular}{lrrrr}

\toprule

Configuration & Iterations & Sim. steps & Discr. & Solved\tabularnewline

\midrule

Base & 116.46 & 1389.82 & 1.37 & 332\tabularnewline
Uniform probability & 949.62 & 17539.59 & 1.31 & 237\tabularnewline
NB probability & 528.39 & 8014.03 & 1.35 & 248\tabularnewline
Random reward & 104.88 & 1167.98 & 1.19 & 364\tabularnewline
Branch weight reward & 108.13 & 1268.88 & 1.12 & 334\tabularnewline
ML closability
reward & 108.52 & 1151.61 & \textbf{2.30} & \textbf{367}\tabularnewline
Default exp. pol. & 371.81 & 4793.58 & 1.38 & 328\tabularnewline
Minimal exp. pol. 2 & 224.72 & 2769.12 & 1.40 & 348\tabularnewline

\bottomrule

\end{tabular}

\end{table}

We explored a range of values for several numeric parameters, for which
we show results in \autoref{fig:param-discussion}: The maximal number of
MCTS iterations \lstinline!maxIterations! performs best between 20 and
40, see \autoref{fig:maxiters}: Below 20, MCTS can not provide any
meaningful quality estimates, and above 40, the quality estimates do not
significantly improve any more, while costing computational resources.
The exploration constant \(C_p \approx 0.75\) gives best results, where
the machine-learnt closability reward achieves a local optimum, see
\autoref{fig:exploration}: At such an optimum, exploration and
exploitation combine each other best, therefore the existence of such an
optimum is a sanity check for reward heuristics (which the branch ratio
reward does not pass). The maximal simulation depth
\(s_{\max} \approx 20\) seems to perform best, see
\autoref{fig:simdepth}. Above this value, the number of solved problems
decreases, since the number of actually performed simulation steps
decreases, as shown in \autoref{fig:simdepth-ss}. This might be
explained by the fact that at higher simulation depths, the
computational effort to calculate the set of possible steps increases,
for example because the substitution contains more and larger elements.

\begin{figure}[ht!]
  \subfloat[Maximal number of MCTS iterations.]{
    \begin{tikzpicture}[scale=0.65]
  \begin{axis}
  [ legend pos=south east
    , xlabel=\texttt{maxIterations}
  , ylabel=Problems solved
  ]
    \addplot [mark=none] table {tikz/montecop-170209-maxiters};
  \end{axis}
\end{tikzpicture}
    \label{fig:maxiters}
  }
  \subfloat[Exploration.]{
    \begin{tikzpicture}[scale=0.65]
  \begin{axis}
  [ legend pos=south east
  , xlabel=$C_p$
  , ylabel=Problems solved
  ]
    \addplot+ [mark=none] table {tikz/montecop-170209-mlrew-exploration};
    \addplot+ [dashed,mark=none] table {tikz/montecop-170209-exploration};
    \legend{Machine-learnt closability reward,Branch ratio reward}
  \end{axis}
\end{tikzpicture}
    \label{fig:exploration}
  }

  \subfloat[Maximal simulation depth.]{
    \begin{tikzpicture}[scale=0.65]
  \begin{axis}
  [ legend pos=south east
  , xlabel=$s_{\max}$
  , ylabel=Problems solved
  ]
    \addplot [mark=none] table {tikz/montecop-170209-simdepth};
  \end{axis}
\end{tikzpicture}
    \label{fig:simdepth}
  }
  \subfloat[Simulation steps / Maximal simulation depth.]{
    \begin{tikzpicture}[scale=0.65]
  \begin{axis}
  [ legend pos=south east
  , xlabel=$s_{\max}$
  , ylabel=Simulation steps
  ]
    \addplot [mark=none] table {tikz/montecop-170209-simdepth-simsteps};
  \end{axis}
\end{tikzpicture}
    \label{fig:simdepth-ss}
  }

  \caption{Parameter influence.}
  \label{fig:param-discussion}
\end{figure}

We adapted the base configuration to use the best heuristics from
\autoref{tab:montecarlo-results} and the best values for parameters
discussed in \autoref{fig:param-discussion}, yielding \(s_{\max} = 20\),
\(C_p = 0.75\), and \lstinline!maxIterations! \(= 27\). This improved
configuration solves 538 problems, compared to 509 solved by the best
single leanCoP strategy.

\section{Clausal and Nonclausal Proof
Certification}\label{certification}

In this section, we show how to certify connection tableaux proofs by
reconstructing them in HOL Light. To abstract from the technical
details, we give translations for both clausal and nonclausal versions
of connection proofs to LK \citep{gentzen1935-schliessen}. As the
implementation can be used for regular proof search in HOL Light, we
evaluate its performance on HOL Light problem sets.

\subsection{Converting HOL to FOL
Problems}\label{converting-hol-to-fol-problems}

To use leanCoP respectively nanoCoP as a proof tactic in HOL Light, it
is necessary to convert a given proof goal from HOL to FOL. For this, we
reuse a large part of the MESON \citep{harrison1996-meson}
infrastructure, such as instantiation of higher-order axioms. Once we
are left with a first-order problem
\((A_1 \land \dots \land A_n) \rightarrow C\), we transform it similarly
to \autoref{preprocessing}. From the resulting formula, we create a
matrix \(M\) and a literal database as explained in \autoref{matlitdb},
and search for a proof. The resulting proof consists of a connection
proof tree and a global substitution \(\sigma\). Given this information,
we want to construct a proof of \(M \vdash \bot\), which is written in
LK as \(M \vdash\). We show such a translation method both from clausal
and nonclausal proofs to LK in the next sections.

\subsection{Connection Calculi for Proof
Translation}\label{conn-calc-recon}

In the presentation of the connection calculi in
\citep{otten2011-nonclausal}, all proof rules have a fixed number of
premises. To ease the translation of proofs, we present slightly
reformulated versions of the calculi. We introduce the following
notation for rules with an arbitrary number of premises:
\[\parbox[c]{\hsize}
  {\begin{prfenv} \prftree{\displaystyle{\bigwedge} _ i P_i}{C} \end{prfenv}}
\quad \equiv \quad
\parbox[c]{\hsize}
  {\begin{prfenv} \prftree{P_1}{\dots}{P_n}{C} \end{prfenv}}\] The
reformulated calculi for translation are shown in
\autoref{fig:clausal-calculus-recon} and
\autoref{fig:nonclausal-calculus-recon}. The words of the original
calculi were \(\langle C, M, Path \rangle\). In the reformulated
calculi, the words are \(\langle X, M, Path \rangle\), where \(X\)
denotes an arbitrary clause element, i.e.~a matrix or a literal.
Furthermore, in the new calculi, the axiom rule becomes obsolete. Proofs
can be trivially translated between the connection calculi in this
chapter and those shown in \autoref{conncalculus}.

\begin{figure}
  \begin{tabular}{r m{6cm}}
  \parbox[c]{\hsize}{
    \begin{prfenv}
      \prftree[r]{Start}
      {\prfassumption{\displaystyle{\bigwedge}_i \langle X_i, M, \{\} \rangle}}
      {\varepsilon, M, \varepsilon}
    \end{prfenv}}
  & where $\{X_1, \dots, X_n\}$ is copy of $C \in M$
  \\
  \parbox[c]{\hsize}{
    \begin{prfenv}
      \prftree[r]{Reduction}{\phantom{M}}{L, M, Path \cup \{L'\}}
    \end{prfenv}}
  & where $\sigma(L) = \sigma(\overline{L'})$
  \\
  \parbox[c]{\hsize}{
    \begin{prfenv}
      \prftree[r]{Extension}
      {\prfassumption{\displaystyle{\bigwedge}_i \langle L_i, M, Path \cup \{L\} \rangle}}
      {L, M, Path}
    \end{prfenv}}
  & where $\{L_1, \dots, L_n\} \cup \{L'\}$ is copy of $C \in M$
    and $\sigma(L) = \sigma(\overline{L'})$
\end{tabular}

  \caption{Clausal connection calculus for translation.}
  \label{fig:clausal-calculus-recon}
\end{figure}

\begin{figure}
  \begin{tabular}{r m{5cm}}
  \parbox[c]{\hsize}{
    \begin{prfenv}
      \prftree[r]{Start}
      {\prfassumption{\displaystyle{\bigwedge}_i \langle X_i, M, \{\} \rangle}}
      {\varepsilon, M, \varepsilon}
    \end{prfenv}}
  & where $\{X_1, \dots, X_n\}$ is copy of $C \in M$
  \\
  \parbox[c]{\hsize}{
    \begin{prfenv}
      \prftree[r]{Reduction}{\phantom{M}}{L, M, Path \cup \{L'\}}
    \end{prfenv}}
  & where $\sigma(L) = \sigma(\overline{L'})$
  \\
  \parbox[c]{\hsize}{
    \begin{prfenv}
      \prftree[r]{Extension}
      {\prfassumption{\displaystyle{\bigwedge}_i \langle X_i, M[C_1 \backslash C_2], Path \cup \{L\} \rangle}}
      {L, M, Path}
    \end{prfenv}}
  & where $\{X_1, \dots, X_n\}$ is the $\beta$-clause of $C_2$ wrt. $L'$,
    $C_2$ is copy of $C_1$, $C_1$ is e-clause of $M$ wrt. $Path \cup \{L\}$,
    $C_2$ contains $L'$ with $\sigma(L) = \sigma(\overline{L'})$
  \\
  \parbox[c]{\hsize}{
    \begin{prfenv}
      \prftree[r]{Decomposition}
      {\prfassumption{\displaystyle{\bigwedge}_i \langle X_i, M, Path \rangle}}
      {M', M, Path}
    \end{prfenv}}
  & where $\{X_1, \dots, X_n\} \in M'$
\end{tabular}

  \caption{Nonclausal connection calculus for translation.}
  \label{fig:nonclausal-calculus-recon}
\end{figure}

\begin{example}\label{ex:conn-calc-recon}Consider \autoref{ex:conn-calc}
on page~\pageref{ex:conn-calc}. For the matrices \(M\) respectively
\(M'\), proofs in the connection calculi for translation are given in
\autoref{fig:nonclausal-calc-ex-recon} respectively
\autoref{fig:clausal-calc-ex-recon}.\end{example}

\begin{figure}
  $\input{prftree/nonclausal-calc-ex-recon2}$
  \caption{Proof in the nonclausal connection calculus for translation.}
  \label{fig:nonclausal-calc-ex-recon}
\end{figure}

\begin{figure}
  $\input{prftree/clausal-calc-ex-recon2}$
  \caption{Proof in the clausal connection calculus for translation.}
  \label{fig:clausal-calc-ex-recon}
\end{figure}

\subsection{Connection Proof
Translation}\label{connection-proof-translation}

We translate connection proof trees recursively by distinguishing the
different rules of the calculus. We denote by \([\Gamma \vdash]\) the LK
translation of the connection proof for \(\Gamma\). In the LK
translation, matrices and clauses represent formulas, where a matrix is
a conjunction of clauses and a clause is (potentially universally
quantified) disjunction. We write that \(C\) is in \(M\) iff
\(M = C_1 \land \dots \land C_n\) with \(C = C_i\) for some \(i\) with
\(1 \leq i \leq n\).

We use an LK rule \(\land\)L which extracts a conjunct from a
conjunction, however keeping the conjunction in the context:
\[\begin{prfenv}
  \prftree[r]{$\land$L}
    {\Gamma, C_i, C_1 \land \dots \land C_n \vdash \Delta}
    {\Gamma, C_1 \land \dots \land C_n \vdash \Delta}
\end{prfenv}\] Furthermore, we use an LK rule \(\bot\)L which derives
\(\bot\) from two complementary literals in the context:
\[\begin{prfenv}
  \prftree[r]{$\bot$L}
    {\Gamma, A, \lnot A \vdash}
\end{prfenv}\]

Let us start with the two rules that are translated the same way for
clausal and nonclausal proofs, namely the start and the reduction rule.
The translation of these rules is shown in \autoref{fig:common-transl}.

For the start rule, the translation obtains the formula corresponding to
the clause \(C\) with the \(\land\)L rule, and instantiates it with the
\(\forall\)L rule. The substitution \(\sigma\) is used to determine the
instantiations, where fresh names are invented when a variable is
unbound in the substitution. Then, the sequent is split into several
subsequents \([X_i, M \vdash]\), which represent the translations of the
connection proofs for \(\langle X_i, M, \{\} \rangle\).\footnote{In the
  clausal setting, \(X_i\) could be written as \(L_i\), but as the same
  rule is used in the nonclausal setting, where \(X_i\) can represent
  either a literal or a matrix, we write \(X_i\) for the common rules.}

\begin{figure}
  \begin{tabular}{c c}
  \toprule
  Connection Calculus & LK \\
  \midrule
  \makecell{
    \begin{prfenv}
      \prftree[r]{Start}
      {\prfassumption{X_1, M, \{\}}}
      {\prfassumption{\dots}}
      {\prfassumption{X_n, M, \{\}}}
      {\varepsilon, M, \varepsilon}
    \end{prfenv} \\
    where $\{X_1, \dots, X_n\}$ is a copy of $C \in M$}
  &
  \makecell{
    \begin{prfenv}
      \prftree[r]{$\land$L}
        {\prftree[r]{$\forall$L}
          {\prftree[r]{$\lor$L}
            {\prfboundedassumption{X_1, M, \{\} \vdash}}
            {\prfassumption{\dots}}
            {\prfboundedassumption{X_n, M, \{\} \vdash}}
            {X_1 \lor \dots \lor X_n, M \vdash}}
          {\forall \vec x. (X_1 \lor \dots \lor X_n), M \vdash}}
        {M \vdash}
    \end{prfenv} \\
    where $\forall \vec x. (X_1 \lor \dots \lor X_n)$ in $M$
  }
  \\
  \midrule
  \makecell{
    \begin{prfenv}
      \prftree[r]{Reduction}{\phantom{M}}{L, M, Path \cup \{L'\}}
    \end{prfenv} \\
    where $\sigma(L) = \sigma(\overline{L'})$
  }
  &
  \makecell{
    \begin{prfenv}
      \prftree[r]{$\bot$L}{\phantom{M}}{L, M, Path \cup \{\overline L\} \vdash}
    \end{prfenv}
  } \\
  \bottomrule
\end{tabular}

  \caption{LK translation of common connection calculus rules.}
  \label{fig:common-transl}
\end{figure}

\subsubsection{Clausal Proof Translation}\label{cl-transl}

The translation of the clausal extension rule (shown in
\autoref{fig:clausal-calculus-recon}) is given in
\autoref{fig:ext-c-transl}. First, \(L, M, Path \vdash\) is transformed
to the equivalent \(M, P \vdash\), where \(P = Path \cup \{L\}\).
Structurally, the remaining translation resembles that of the start
rule, with the exception that it additionally closes a proof branch
containing the negated literal \(\overline L\).

\begin{figure}
  \input{prftree/ext-c-transl}
  \caption{LK translation of the clausal extension rule.}
  \label{fig:ext-c-transl}
\end{figure}

\subsubsection{Nonclausal Proof
Translation}\label{nonclausal-proof-translation}

We now proceed with the translation of nonclausal connection proofs,
using the calculus introduced in \autoref{fig:nonclausal-calculus}. The
LK context in the translation of nonclausal proofs now has the shape
\(X, \vec M, Path\). \(X\) refers to either a literal or a matrix,
whereas \(X\) represented just literals in \autoref{cl-transl}.
Furthermore, we use a set of matrices \(\vec M\), instead of a single
matrix \(M\) as in the clausal case. During translation, \(\vec M\) is
extended such that for each word \(\langle L, M, Path \rangle\) in the
connection calculus and its corresponding LK sequent
\(L, \vec M, Path \vdash\), the e-clauses of \(M\) wrt
\(Path \cup \{L\}\) are the clauses \(C\) for which \(C\) in \(M'\) and
\(M' \in \vec M\). We will see this in detail in the explanation for the
extension rule.

The LK translation of nonclausal proofs reuses the translations of the
start and the reduction rules given in \autoref{fig:common-transl}.
However, occurrences of \(M\) in the LK translation are replaced by
\(\vec M\). The start rule uses \(\vec M = \{M\}\), i.e. \(\vec M\)
contains only the initial problem matrix \(M\).

The decomposition rule of the nonclausal calculus can be seen as a
generalisation of the start rule. We give its translation to LK in
\autoref{fig:decomposition-transl}.

\begin{figure}
  \begin{tabular}{c c}
  \toprule
  Connection Calculus & LK \\
  \midrule
  \makecell{
    \begin{prfenv}
      \prftree[r]{Decomposition}
      {\prfassumption{\displaystyle{\bigwedge}_i \langle X_i, M, Path \rangle}}
      {M', M, Path}
    \end{prfenv} \\
    where $\{X_1, \dots, X_n\} \in M'$}
  &
  \makecell{
    \begin{prfenv}
      \prftree[r]{$\land$L}
      {\prftree[r]{$\forall$L}
        {\prftree[r]{$\lor$L}
            {\prfboundedassumption{X_1, \vec M', Path \vdash}}
            {\prfassumption{\dots}}
            {\prfboundedassumption{X_n, \vec M', Path \vdash}}
          {X_1 \lor \dots \lor X_n, \vec M', Path \vdash}}
	    {\forall \vec x. (X_1 \lor \dots \lor X_n), \vec M', Path \vdash}}
      {M', \vec M, Path \vdash}
    \end{prfenv} \\
    where $\forall \vec x. (X_1 \lor \dots \lor X_n)$ in $M'$ \\
    and   $\vec M' = \{M'\} \cup \vec M$
  } \\
  \bottomrule
\end{tabular}

  \caption{LK translation of the decomposition rule.}
  \label{fig:decomposition-transl}
\end{figure}

Let us now consider a nonclausal extension step applied to
\(\langle L, M, Path \rangle\). Let \(C_1\) denote the e-clause of \(M\)
wrt \(Path \cup \{L\}\) that was used for the extension step. By
construction of \(\vec M\) mentioned above, \(C_1\) is some clause in
\(M_1 \in \vec M\). Furthermore, let \(\beta_ 1\) be the
\(\beta\)-clause of \(C_1\) wrt \(\overline L\). Then we can find some
\(m\) such that \(M_1\), \(C_1\) and \(\beta _1\) can be written as in
\autoref{fig:matrix-beta}.

\begin{figure}
  \input{prftree/matrix-beta}
  \caption{Definition of matrix $M_i$, clause $C_i$, and $\beta$-clause $\beta_i$.}
  \label{fig:matrix-beta}
\end{figure}

\begin{figure}
  \input{prftree/ext-nc-transl}
  \caption{LK translation of the nonclausal extension rule.}
  \label{fig:ext-nc-transl}
\end{figure}

The translation of the nonclausal extension rule is shown in
\autoref{fig:ext-nc-transl}. We first transform
\(L, \vec M, Path \vdash\) to the equivalent \(\vec M^0, P \vdash\), as
\(\vec M^0 = \vec M\) and \(P = Path \cup \{L\}\). We then determine
\(M_1 \in \vec M\) and put it into the context by contraction (CL).

Now we recursively prove the sequent \(M_i, M_{i-1}, P \vdash\): If
\(M_i\) is the literal \(\overline L\), we prove the sequent
\(\overline L, \vec M^m, P \vdash\) with the \(\bot\)L rule. Otherwise,
we proceed as follows: First, we choose the appropriate clause \(C_i\)
of \(M_i\) that corresponds to \(\beta _i\). In the same step, we merge
\(M_i\) with \(\vec M^{i-1}\), yielding \(\vec M^i\). After the
instantiation of \(C_i\), the clause elements \(X _ {i,1}\) to
\(X _ {i,n_i}\) give rise to several proof branches where all but one
are closed by translation of the proof branches of the connection proof.
The one remaining clause element \(M_{i+1}\) gives rise to a sequent
\(M_{i+1}, \vec M^i, P \vdash\), which we translate by recursion. This
concludes the translation of the extension rule.

\begin{figure}
  \input{prftree/nonclausal-lk-ex}
  \caption{Translation of a nonclausal proof to LK.}
  \label{fig:nonclausal-lk-ex}
\end{figure}

\begin{example}Consider the nonclausal proof given in
\autoref{fig:nonclausal-calc-ex-recon}. We show its translation to LK in
\autoref{fig:nonclausal-lk-ex}, where boxed sequents indicate words of
the original proof. We use \(F\) from \autoref{ex:conn-calc} to define
\[\begin{aligned}
  \vec M_0 &= \{F\}, \\
  \vec M_1 &= \vec M_0 \cup \{\lnot P(s^2 a) \land (P(sa) \lor \lnot Q)\}, \\
  \vec M_2 &= \vec M_1 \cup \{\lnot P(s^3 a) \land (P(s^2 a) \lor \lnot Q)\}.
  \end{aligned}\]\end{example}

\subsection{Evaluation}\label{evaluation-1}

We evaluate our HOL Light proof search tactics based on leanCoP and
nanoCoP and compare their performance with the Metis
\citep{faerber2015-metis} and MESON \citep{harrison1996-meson} tactics
integrated into HOL Light. Similarly to \citep{kaliszyk2015-holcop}, we
disable splitting for MESON. As evaluation datasets, we use toplevel and
MESON goals from core HOL Light as well as from Flyspeck, see
\autoref{tab:hol-eval-datasets}.\footnote{Note that this table mentions
  a larger number of MESON goals than \autoref{fof-datasets}. This is
  because we consider for this evaluation also those problems that are
  solved by the first-order export.} We use the
\href{https://github.com/jrh13/hol-light/commit/08f4461}{Git version
08f4461 of HOL Light from March 2017}, running every tactic with a
timeout of 10s on each problem.

The results are shown in \autoref{tab:hol-results}. As many problems in
HOL Light are solved using either MESON or Metis, the problems are
likely biased towards these two provers. Furthermore, the fact that both
MESON and Metis are clausal might have shaped the design of the theorems
in HOL Light towards solvability by clausal provers.

Comparing \autoref{tab:hol-results} with \autoref{tab:table1}, we notice
that the stand-alone connection provers perform better than their
counterparts integrated in HOL Light. Apart from different
preprocessing, this can be explained by different array access
performance: Array access is more than 30 times faster in native OCaml
programs, compared to programs compiled in OCaml's toplevel (as used in
HOL Light). This heavily affects our connection provers, as fast
unification via arrays is critical for their performance, see
\autoref{unification}.

\begin{table}

\caption{HOL Light evaluation datasets and number of contained problems.
\label{tab:hol-eval-datasets}}

\begin{tabular}{llll}

\toprule

HL-top & HL-meson & FS-top & FS-meson\tabularnewline

\midrule

2499 & 1119 & 27112 & 44468\tabularnewline

\bottomrule

\end{tabular}

\end{table}

\begin{table}

\caption{HOL Light tactic results. \label{tab:hol-results}}

\begin{tabular}{lrrrr}

\toprule

Prover & HL-top & HL-meson & FS-top & FS-meson\tabularnewline

\midrule

Metis & 807 & 1029 & 4626 & 42829\tabularnewline
MESON & 736 & 900 & 4221 & 39227\tabularnewline
\midrule leanCoP\(+\)cut & 724 & 948 & 3714 & 39922\tabularnewline
leanCoP\(-\)cut & 717 & 844 & 3800 & 38528\tabularnewline
nanoCoP\(+\)cut & 538 & 802 & 2743 & 34213\tabularnewline
nanoCoP\(-\)cut & 550 & 811 & 2351 & 34769\tabularnewline

\bottomrule

\end{tabular}

\end{table}

\section{Related Work}\label{related-work}

A number of related works has already been discussed in previous
sections. In particular in \autoref{conncalculus}, we introduced the
connection calculus \citep{bibel1991-connection} as a variant of
tableaux \citep{letz2001-connection}, we discussed its implementation in
the leanCoP theorem prover \citep{otten2003-leancop}, a number of
improvements introduced in the second version of leanCoP
\citep{otten2008-leancop} including restricted backtracing
\citep{otten2010-cut}, and the nonclausal variant of the connection
calculus \citep{otten2011-nonclausal} together with its implementation
\citep{otten2016-nanocop}.

The compact Prolog implementation of theorem provers following the
\emph{lean} architecture made it attractive for many experiments both
with the calculus and with the implementation. The intuitionistic
version of leanCoP \citep{otten2005-ileancop} became the state-of-art
prover for first-order problems in intuitionistic logic
\citep{raths2007-iltp}. Connections have also been considered for
first-order modal logic in mleanCoP \citep{otten2014-mleancop}, for
higher-order logic \citep{andrews1989-hoconn} and for linear logic
\citep{galmiche2000-linlogconn}. Various implementation modifications
can be performed very elegantly, such as search strategies, scheduling,
randomization of the order of proof search steps
\citep{raths2008-randocop}, and internal guidance
\citep{urban2011-malecop, kaliszyk2015-femalecop}.

A number of early learning and data based approaches to guide automated
theorem provers has been surveyed in \citep{denzinger1999-experience}.
The Prover9 hints method \citep{veroff1996-hints} allows the user to
specify (an often large set of) clauses to treat in a special way. A
similarly working \emph{watch list} has been later integrated in E,
along with other learning mechanisms \citep{schulz2001-learning}.
Further methods for guiding the actual proof search of ATPs using
machine learning have been considered in the integration of a Naive
Bayesian classifier to select next proof actions in Satallax
\citep{faerber2016-satallax}, as well as in Enigma
\citep{jakubuv2017-enigma} where the clause selection in E uses a
tree-based n-gram approach to approximate similarity to the learned
proofs using a support vector machine classifier. Holophrasm
\citep{whalen2016-holophrasm} introduces a theorem prover architecture
using GRU neural networks to guide the proof search of a tableaux style
proof process of MetaMath. TensorFlow neural network guidance was
integrated in E \citep{loos2017-deepnetwork}, showing that with batching
and hybrid heuristics, it can solve a number of problems other
strategies cannot solve. Finally, various reasons as to why the
connection calculus is well suited for machine learning techniques,
especially deep learning, are considered in \citep{bibel2017-deeplean}.

The main use of machine learning in automated and interactive theorem
provers today is to reduce original problems before the actual proof
search. Machine learning based methods
\citep{kuehlwein2013-mash, blanchette2016-mash} improve on and
complement the various ATP heuristics \citep{hoder2011-sine} and ITP
heuristics \citep{meng2009-mepo}. The problem of selecting the most
useful lemmas for the given proof, refered to as ``premise selection''
or ``relevance filtering'' \citep{alama2014-premsel} nowadays uses
syntactic similarity approaches, simple Naive Bayes and k-NN based
classifiers, regression and kernel based methods
\citep{kuehlwein2012-premsel}, as well as deep neural networks
\citep{irving2016-deepmath}. This has become especially important in the
``large theory bench'' division added to the CADE Automated Systems
Competition in 2008 \citep{sutcliffe2009-j4}, with systems such as
MaLARea \citep{urban2008-malarea} and ET \citep{kaliszyk2015-et}
achieving notable results.

Theorem proving can be seen as a game -- for instance, it has been
modelled as a two-player game in the framework of game-theoretical
semantics \citep{hintikka1982-games}. Monte Carlo Tree Search (MCTS)
\citep{browne2012-survey} has been found to produce state-of-the-art
players for several games, most notably for the two-player game Go
\citep{silver2016-alphago}, but also for single-player games such as
SameGame \citep{schadd2012-samegame}. It therefore seems reasonable to
apply MCTS to the game of theorem proving. One-step lookahead can help
Vampire proof search \citep{hoder2016-selection}, suggesting that MCTS,
whose simulation phase can be seen as multi-step lookahead, can
effectively guide proof search.

Certification of ATP found proofs has been especially important for the
integration of ATPs into interactive proof assistants. Such components
provide automation in the form of proof tactics or automated
justification for smaller steps. HOL Light includes the certified proof
producing model elimination prover MESON \citep{harrison1996-meson}. The
paramodulation-based prover Metis \citep{hurd2003-metis} was designed
with a small certified proof core to simplify its integration with
interactive theorem provers \citep{faerber2015-metis}. Coq includes a
proof certifying version of the intuitionistic first-order automated
theorem prover JProver \citep{schmitt2001-jprover} and Matita includes a
proof certifying version of an ordered paramodulation prover
\citep{asperti2007-horecon}. A translation of connection tableaux proofs
to expansion trees which can be used for proof certification was studied
in \citep{reis2015-expansion}. An alternative approach to proof
certification is the usage of verified automated theorem provers
\citep{ridge2005-verifol}.

\section{Conclusion and Future work}\label{conclusion-and-future-work}

We have given an overview of our experiments conducted with connection
provers. First, we presented translations to functional programming
languages, exploring possibilities to increase the speed of proof search
while keeping the implementation as simple as possible. We showed that
the number of solved problems can be increased by up to 58.8\%, on one
dataset beating even E in automatic mode. Next, we discussed machine
learning integration in leanCoP via context-sensitive clause ordering
and Monte Carlo Tree Search, showing that both these techniques can
increase the number of solved problems, despite fewer inferences being
performed. Finally, we showed how to translate clausal and nonclausal
connection proofs to LK, yielding a usable proof search tactic for HOL
Light.

The performed machine learning experiments are promising enough to
justify the enhancement of Monte Carlo Proof Search with stronger
heuristics, such as neural networks. While we applied Monte Carlo Tree
Search to theorem proving as a single-player game, it could also be used
to treat theorem proving as two-player game.

The combination of several tools that are small, simple and
comprehensible can be more effective than a large, monolithic tool.
While the resulting connection provers cannot yet outperform larger
systems like Vampire \citep{kovacs2013-vampire} and E
\citep{schulz2013-e}, we hope that the insight gained by experiments
performed in connection provers might be used in their complex
counterparts. Connection provers might be candidates for the core of
future automated reasoning tools and artificial intelligence
experiments.

\begin{acknowledgement}We thank the reviewers of CPP, LPAR, and CADE for
their valuable comments. This work has been supported by a doctoral
scholarship of the University of Innsbruck, the Austrian Science Fund
(FWF) grant P26201, the European Research Council (ERC) grants no.
649043 \emph{AI4REASON} and no. 714034 \emph{SMART}, the Czech project
AI\&Reasoning CZ.02.1.01/0.0/0.0/15\_003/0000466, and the European
Regional Development Fund.\end{acknowledgement}

\bibliography{literature}

\end{document}